\newcommand{\subparagraph}{}
\titlespacing{\section}{0pt}{1ex}{0ex}
\titlespacing{\subsection}{0pt}{0ex}{0ex}
\titlespacing{\subsubsection}{0pt}{0ex}{0ex}
\newcounter{algorithm}
\algrenewcommand\textproc{\textsc}
\long\def\ruzica#1{{\color{red}{\bf Ruzica: }{\small #1}}}
\long\def\timos#1{{\color{blue}{\bf Timos: }{\small #1}}}
\long\def\qiao#1{{\color{orange}{\bf Qiao: }{\small #1}}}
\newtheoremstyle{exampstyle}
  {0em} 
  {0em} 
  {} 
  {} 
  {\bfseries} 
  {.} 
  {0em} 
  {} 
\theoremstyle{exampstyle}
\definecolor{Enc}{gray}{0.6}
\newcommand{\para}[1]{\noindent {\bf #1}}
\newcommand{\ie}{{\em i.e.}}
\newcommand{\eg}{{\em e.g.}}
\newcommand{\mat}[1]{\mathbf{#1}}
\newcommand{\nats}{\mathbb{N}}
\newcommand{\system}{IVeri}
\newtheorem{problem}{\textbf{Problem}}
\newtheorem{thm}{\textbf{Theorem}}
\newcommand{\A}{\mathbf{A}}
\newcommand{\B}{\mathbf{B}}
\newcommand{\EC}[3]{#1 \oplus \mathbf{PRF}(#2, #3)}
\newcommand{\FS}{\pi(\F)}
\newcommand{\OC}{O}
\newcommand{\as}{\mathbf{assign}}
\newcommand{\asi}{assign}
\newcommand{\pr}{\mathbf{prior}}
\newcommand{\F}{\mathbf{F}}
\newcommand{\Enc}[3]{#1\oplus\mathbf{PRF}(#2, #3)}
\begin{document}
\title{\system{}: Privacy-Preserving Interdomain Verification}

\author{
	\textit{Technical Report}\\
	Ning Luo$^{\ddagger}$,
		Qiao Xiang$^{\dagger\ddagger}$\thanks{Ning Luo and Qiao Xiang are co-primary authors. The corresponding author is Qiao Xiang (qiaoxiang@xmu.edu.cn).},
	Timos Antonopoulos$^{\ddagger}$,
	Ruzica Piskac$^{\ddagger}$,
	Y. Richard Yang$^{\ddagger}$,
	Franck Le$^{\ast}$
 \\	
	$^\dagger$Xiamen University,
	$^{\ddagger}$Yale University,
	$^{\ast}$IBM Watson Research Center
}

%

\maketitle

\begin{abstract}
In an interdomain network, autonomous systems (ASes) often establish peering
	agreements, so that one AS (agreement consumer) can influence the
	routing policies of the other AS (agreement provider). Peering
	agreements are implemented in the BGP configuration of the agreement
	provider. It is crucial to verify their implementation because one error
	can lead to disastrous consequences. However, the fundamental challenge
	for peering agreement verification is how to preserve the
	\textit{privacy} of both ASes involved in the agreement. To this end,
	this paper presents \system{}, the first privacy-preserving interdomain
	agreement verification system. \system{} models the interdomain
	agreement verification problem as a SAT formula, and develops a novel,
	efficient, privacy-serving SAT solver, which uses oblivious shuffling
	and garbled circuits as the key building blocks to let the agreement
	consumer and provider collaboratively verify the implementation of
	interdomain peering agreements without exposing their private
	information. A prototype of \system{} is implemented and evaluated
	extensively. Results show that \system{} achieves accurate,
	privacy-preserving interdomain agreement verification with reasonable
	overhead.
\end{abstract}

\section{Introduction}
\label{sec:intro}

An interdomain network (\eg, the Internet~\cite{gao2001stable} and the Large
Hadron Collider science network~\cite{lhc}) connects multiple autonomous systems
(ASes) and exchanges traffic among them. The \textit{de facto} interdomain
routing protocol is the Border Gateway Protocol~\cite{BGP}. ASes use BGP to
exchange routing information. In addition, BGP also allows each AS to
independently apply its local routing policies to select and advertise
interdomain routes. 


Due to economic or collaborative incentives, two peering ASes often establish
\textit{peering agreements} so that one AS (\textit{agreement consumer}) can
influence the routing policies of the other AS (\textit{agreement provider}).
Such agreements are implemented in the configurations of the provider's BGP
routers. Consider an interdomain network in Figure~\ref{fig:motivation-example}.
An example of a peering agreement between ASes $A$ and $B$ is that $B$ will send
all $A$'s traffic towards $F$ through $D$. In this example, $A$ is the agreement
consumer, and $B$ is the agreement provider, who implements this agreement in its
BGP configurations. 



The correct implementation of peering agreements is crucial for ASes to realize
their network management goals, \eg, defense against distributed
denial-of-service (DDoS) attacks, inbound and outbound traffic engineering, and
fulfillment of regulations on data traffic traversal. Nevertheless, technology
news have been repeatedly reporting about peering agreements not being
implemented correctly, leading to disastrous results~\cite{timewarner,
bgpleak-japan, bgpleak-aws, mahajan2002understanding}. For example,
due to an error in a BGP configuration in 2018, Google lost control of several
millions of IP addresses for more than one hour and rerouted the internet
traffic through China~\cite{bgp-google}. 


\begin{figure}[t]
\centering
\includegraphics[width=0.7\linewidth]{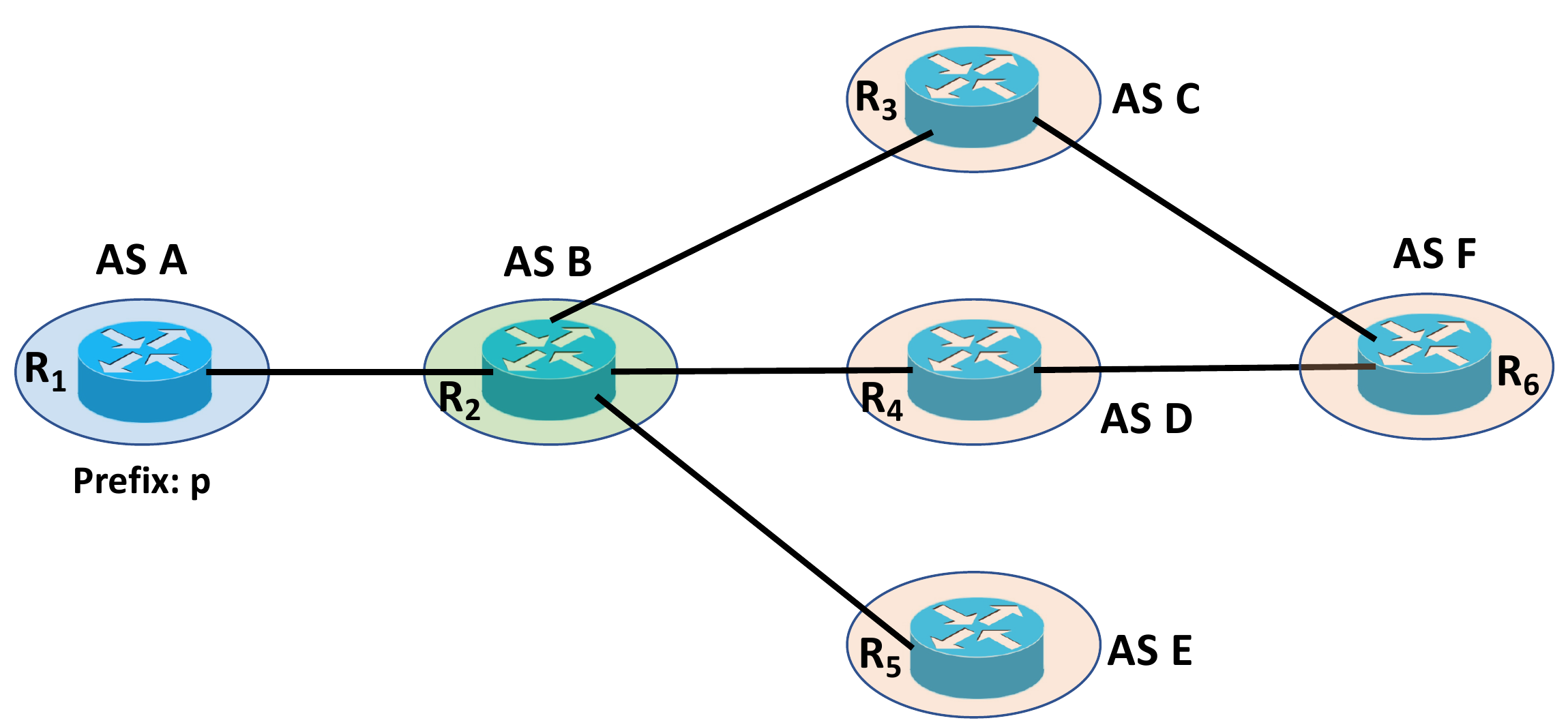}
\caption{An example of interdomain network to illustrate BGP and interdomain
	peering agreements.}
\label{fig:motivation-example}
\end{figure}


As such, it is crucial to verify whether a peering agreement is implemented
correctly at the provider's BGP configurations proactively, before the buggy
configurations are deployed or before the agreement is actual invoked (\ie, via
a route announcement with community tag). However, this problem remains
unsolved and non-trivial, despite the fact that many recent tools are focused on finding
configuration errors in routing protocols~\cite{minesweeper, compression, era, fsr, bagpipe}. 
The fundamental challenge for this problem is
how to preserve the \textit{privacy} of both ASes involved in the agreement. On
one hand, the BGP configurations of the agreement provider is private for
operational security~\cite{anwar2015investigating}. On the other hand, our
private conversations with some of the largest content service providers, who
play the role of agreement consumer, show that if possible, they would also
prefer their agreements-to-verify being kept private, for the reason that
exposing these may cause their sensitive information (\eg, internal operational
policies and offline negotiation with other ASes) being inferred. With the
privacy concerns of both parties, strawman solutions such as the client-server design and
the trusted third-party design to apply recent network configuration tools are
not desired. A related approach~\cite{vperf, netreview, sam-icnp-collaborative} is to
design collaborative verification protocols in which an agreement consumer can
verify how the agreement provider processes an actual route announcement 
with the help of all other neighboring ASes of the provider. However, this
approach verifies peering agreement on per route announcement basis, \ie, it can
only find errors after a route announcement is sent.


In this paper, we systematically study this problem and design \system{}, the
first privacy-preserving interdomain agreement verification system.  In
particular, we first leverage the formalism from recent network configuration
verification tools~\cite{minesweeper, compression, era} to let the agreement consumer
model the agreement-to-verify and the agreement provider model the BGP
configurations as SMT (satisfiability modulo theories) formulas and transform to
SAT formulas, respectively. 
Specifically, let $F^B$
represent the BGP configurations of the agreement provider and $F^C$ represent
the SAT formula describing the agreement-to-verify of the
agreement consumer. 
To verify that provider's configurations correctly implement the
agreement-to-verify is to check whether $F^B \Rightarrow F^C$ is a tautology,
which is equivalent to check whether $F^B \land \lnot F^C$ is
unsatisfiable~\cite{minesweeper, compression, era}. To simplify notation, we
denote $\lnot F^C$ as 
$F^A$ and thus the problem of privacy-preserving interdomain agreement
verification is defined as to \textit{determine the satisfiability of $F^A \wedge
F^B$, while $F^A$ and $F^B$ are held privately by the agreement consumer and
provider, respectively}.

To this end, we developed a novel, efficient,
privacy-preserving SAT solver in \system{}. The basic idea of this solver consists of (1) an
oblivious shuffling algorithm \cite{oblivioushuffling} that enables two parties
to shuffle an encrypted version of the verification formula, and further ensures
that the non-encrypted configuration and agreement formulas are always held
private by the respective parties, and (2) the standard DPLL algorithm encoded
with inexpensive garbled circuits \cite{Garbledcircuits}, denoted by GC-DPLL.
The solver determines the satisfiability of the verification formula contributed
by the agreement consumer and provider without loss of their privacy on their
contribution.  


In summary, this paper presents the following \textbf{main contributions}:
\begin{itemize}[leftmargin=*]
	\item We systematically study an important, real problem of
		interdomain peering agreement verification, and present
		\system{}, a privacy-preserving interdomain agreement
		verification system, which is, to the best of our knowledge, the
		first system to verify the agreement implementation in BGP
		configurations;
		
	\item \system{} leverages and expands the modeling technique in recent
		network verification tools and establishes the interdomain peering
		agreement verification problem as a privacy-preserving SAT problem;
	\item \system{} designs a novel, efficient privacy-preserving SAT
		solver that determines the satisfiability of the verification
		formula without leaking the agreement consumer or provider sensitive information. In addition to interdomain verification, this SAT solver also has other application prospectives such as multi-domain resource orchestration via constraint programming~\cite{xiang2019toward}; 
		
	\item A prototype of \system{} is implemented and evaluated with extensive experiments. 
		Results show that \system{} achieves accurate, privacy-preserving interdomain agreement verification with reasonable overhead. 
\end{itemize}

\section{Background, Motivation and Challenge}
\label{sec:background}

This section provides a brief background on BGP and interdomain peering
agreements, demonstrates the importance of interdomain peering agreement
verification, and elaborates on its fundamental challenges.

\subsection{Background}\label{sec:bgp-background}
\para{BGP in a nutshell}. 
BGP~\cite{BGP} is the \textit{de facto} interdomain routing protocol
interconnecting ASes. Specifically, each AS owns a set of IP addresses, and
assigns some of its routers as BGP border routers, which are connected to BGP
border routers in neighboring ASes.  
Figure~\ref{fig:motivation-example} shows an interdomain network with 6 ASes,
where AS $A$ owns IP prefix $p$. For simplicity of the presentation, we assume that each of them has one BGP
border router. 

Two ASes whose BGP routers are connected to each other are called BGP peers,
\eg, $A$ and $B$ are BGP peers.  BGP peers exchange routing information on how
to reach destination IP prefixes through \textit{route announcements}.
Abstractly, each route announcement carries a destination prefix and a sequence
of ASes to traverse to reach the destination prefix, which we call an \textit{AS
path}. In addition, a route announcement can also
carry other attributes, such as \textit{origin}, which is the origin AS of the
destination prefix, or \textit{community tags}, which are numerical values whose
semantics are predefined between two BGP peers via offline communication. For
example, $R_2$ can send a route announcement $(p, [B, A], comtag:10:30)$ to
$R_3$, which includes an destination prefix $p$, an AS path $[B, A]$ to reach $p$, 
and a community tag attribute of value $10:30$.

In BGP, each AS can make and execute its own policies to  decide which routes to
use (\ie, selection policy), and whether or not to announce the to peers (\ie,
export policy).  For example, an AS may prefer to selecting shorter routes for
better latency, and may choose not to announce a route to a peer for business
reason. In reality, these policies are implemented in the configurations of BGP
routers.

\para{BGP peering agreements}.
For economical or collaborative reasons, two BGP peering ASes often reach
peering agreements to allow one AS (agreement consumer) to influence the routing
policies of the other AS (agreement provider)~\cite{com-guide}.  Such agreements
are implemented as part of the route selection/export polices in the BGP
configurations of the agreement provider. Their correct implementation is
essential for ASes to realize their network management goals, \eg, defense
against DoS attacks, inbound and outbound traffic engineering, and fulfillment
of regulations on data traffic traversal. Some representative peering
agreements~\cite{com-guide, vperf} are:

\begin{itemize}[leftmargin=*]
	\item \textbf{Selective export}: let the agreement provider not 
		propagate the route announced by the
		consumer to certain peers of the providerr;  
	\item \textbf{Set local preference}: let the agreement provider set its
		local preference of the route announced by the consumer to
		certain value;
	\item \textbf{Prefer/avoid certain AS}: let the agreement provider
		prefer to /avoid selecting a route containing certain AS when receiving multiple
		route announcement;
\end{itemize}

\subsection{Motivation}\label{sec:motivation}
Verifying whether an interdomain peering agreement is correctly implemented in
BGP configurations is of great importance because one error can lead to
harmful consequences~\cite{timewarner,
bgpleak-japan, bgpleak-aws, mahajan2002understanding}. We use the following example to demonstrate
that.

\para{A motivating example (DoS attack)}.
Consider the interdomain network given in Figure~\ref{fig:motivation-example},
and assume that AS $E$ launches a DoS attack to $A$, \ie, keeps sending high
volumes of malicious traffic to
prefix $p$ in $A$ within a short time period. When $A$ detects a DoS attack,
and suspects that $E$ is the source of this attack, $A$ can defend 
by reaching a \textit{selective-export} agreement with $B$: \textit{$B$ will not
export any route announced by $A$ to $E$}. If such an agreement is implemented
correctly by $B$, $B$ will send a new route announcement $(p, [])$ to $E$, also
called a withdraw announcement, indicating that $B$ cannot reach $p$. As a
result, $E$ cannot continue the attack because it has no route to reach $p$, and
$A$ can observe the stop of the attack.

However, if $B$ does not implement this agreement correctly, $E$ will
continue having a route $B \rightarrow A$ to reach prefix $p$, and use this
route for the attack. As such, $A$ will observe that the attack does not stop,
and shift its suspicion to another AS, \eg, $C$, and reaches another
selective-export agreement with $B$ to not export any route announced by $A$ to
$C$. If this new agreement is implemented by $B$ correctly this time, not only
does $A$ still suffer from the DoS attack, the normal traffic from $C$ to $A$
is also completely blocked, exacerbating the damage of this attack.

\subsection{Challenges}
Many tools have been developed to find configurations errors in routing
protocols~\cite{minesweeper, compression, ms-local, era, fsr,
bagpipe}. As such, a strawman solution to verify whether an peering agreement is
correctly implemented is to treat the agreement the customer wants to verify as
a network property, let a trusted third party collect the agreement from the
consumer and the configurations from the provider and run an existing
configuration verification tool to verify it. However, this is not the case.

%

The reason, and the fundamental challenge of verifying interdomain peering
agreements, is \textit{privacy}. On one hand, given an agreement provider, its
implementation of peering agreements (\ie, the BGP router configuration files)
is private for operational security and commercial
reasons~\cite{anwar2015investigating}. Although some work has shown it is
possible to infer some aspects of ASes' BGP configurations~\cite{gao2001inferring}, the
inference results have limited accuracy and cannot be utilized for peering
agreement verification.

On the other hand, for an agreement consumer, the peering agreements are also considered private. One may find this argument a little odd
at a first glance, however, it is justified. Many
agreements (\eg, selective export, set local preference and AS path prepending)
are invoked ``on-demand" to achieve certain network management goals (\eg,
inbound traffic engineering). When to invoke which agreements is related to the
operation policy of agreement consumers. As such, revealing which agreements to
verify risk exposing its operation policy. For example, consider the not-export-to-$E$
agreement between $A$ and $B$. Assume additionally that it is on-demand requiring $A$ attaching
community tags in its route announcements to $B$. If $A$ reveals to $B$
that it wants to verify that $B$ implements this agreement
correctly, but does not attach any community tags in its route announcements,
$B$ may deduce that either $A$ does not suspect $E$ as a DoS attack, or $A$
and $E$ reach additional commercial collaboration that allows $E$ to send traffic to
$A$, both of which are private to $A$.

\begin{figure}[t]
\centering
\includegraphics[width=0.65\linewidth]{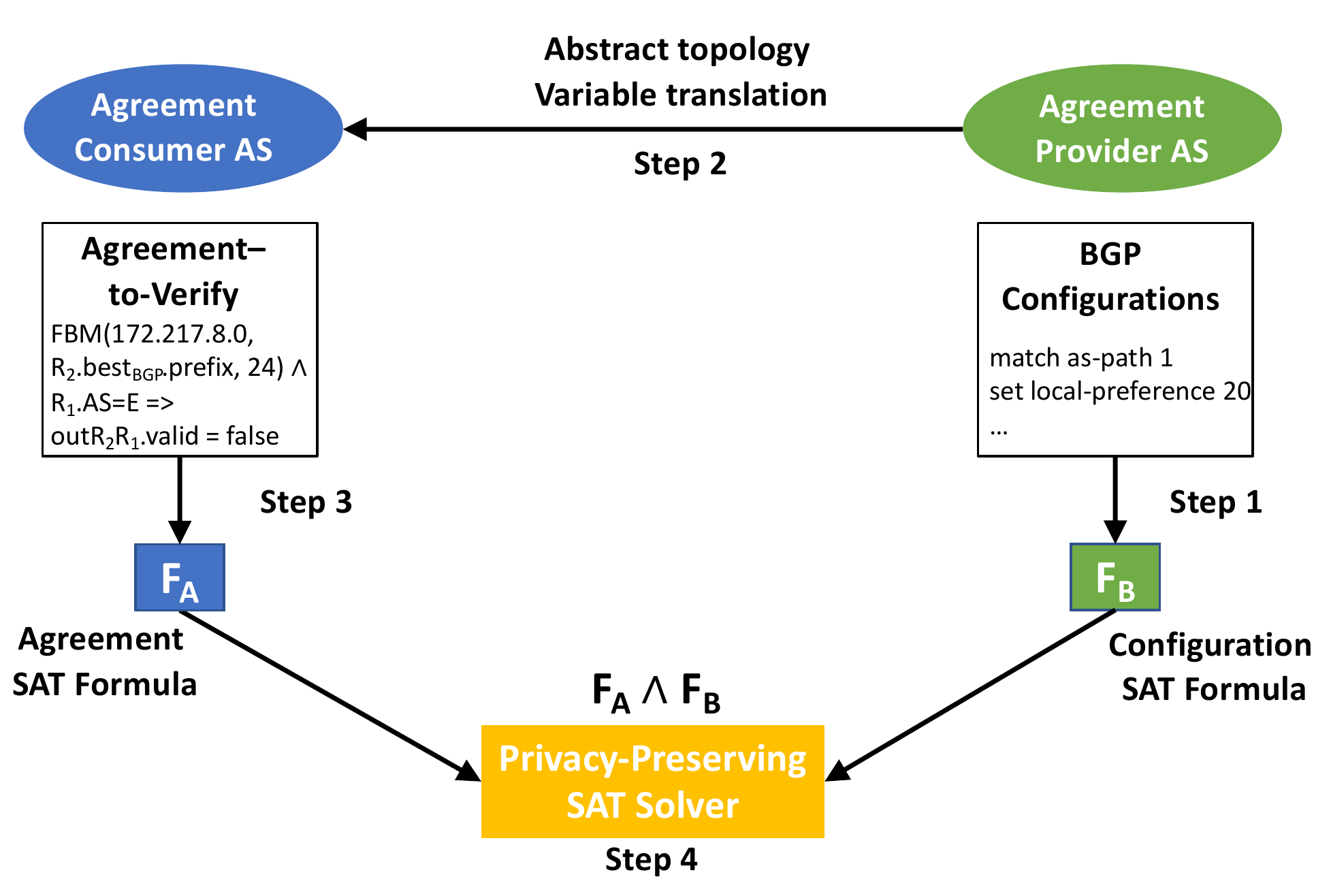}
\caption{The basic architecture and workflow of \system{}.}
\label{fig:arch}
\vspace{-0.5em}
\end{figure}

\section{System Model and Problem Formulation}\label{sec:settings}
\system{} operates between the agreement consumer and the agreement
provider~(Figure~\ref{fig:arch}). 
This section presents the system model of
\system{} and the formulation of the privacy-preserving interdomain agreement
verification problem.

\subsection{System Model}
We leverage the modeling technique of Minesweeper~\cite{minesweeper}, a
state-of-the-art intradomain network verification tool, to model the BGP
configurations of the agreement provider, but go beyond their focus on
intradomain network properties (\eg, reachability and black-hole freeness) and
safety properties of interdomain network (\eg, stability), to develop new models
for interdomain peering agreements.

\para{Agreement provider: encode BGP configurations as SAT formulas}. 
Similar as Minesweeper, in \system{}, the agreement provider extracts its BGP
configurations into an SMT formula. Due to space limit, we refer readers to
~\cite{minesweeper} for more details of this model. As an example, the following
SMT formula  specifies that a BGP router will not announce any route toward the
prefix $172.217.8.0/24$  to the BGP router whose IP address is $65.124.208.93$. 
\begin{equation}
	\footnotesize
	\begin{aligned}
		 FBM(172.217.8.0, best_{BGP}.prefix, 24) 
		\sf \wedge R_i.IP=65.124.208.93 \\
		\sf \Rightarrow  out_{R_4 R_i}.valid = false.
	\end{aligned}
\end{equation}
Because all the variables of the provider's SMT formula are finite-bounded, it
can be easily translated to a SAT formula in CNF (\ie, using bit blasting and Tseytin
transformation). We call the derived formula the \emph{configuration SAT formula}.

\para{Agreement provider: provide abstract BGP router topology and SMT/SAT
variables}. 
The agreement provider exposes to the consumer an abstract topology of a
full-mesh of the provider's BGP routers and their connections to other BGP
peers, and a SMT/SAT variable translation list. The abstract topology provides
sufficient topology information (\ie, all BGP connections of the agreement
provider) for the consumer to specify peering agreements to verify. The exposed
list of SMT/SAT variable translation ensures that the variable semantics are
consistent between the configuration SAT formula and the agreement SAT formula.
It also  allows the agreement provider to decide the scope of peering agreements
that the agreement consumer can verify, so as to restrict the consumer from
snooping-around with arbitrary verification requests.

\para{Agreement consumer: encode peering agreements as SAT formulas}.
With the abstract topology and the SMT/SAT variable list, the agreement consumer
expresses the agreement-to-verify as an SMT formula. 
Consider the selective-export agreement from the motivating example given in
Section~\ref{sec:motivation}, it can be modeled as the
following SMT formula: 
\begin{equation}
	\footnotesize
	\begin{aligned}
\sf\forall i=3, 4, 5.\ &\sf FBM(172.217.8.0, R_2.best_{BGP}.prefix, 24)  
		\sf \wedge R_i.AS=E \\
		&\Rightarrow  out_{R_2 R_i}.valid = false,
	\end{aligned}
\end{equation}
where prefix $p$ is $172.217.8.0/24$. Similarly, other common peering agreements can also
be expressed as SMT formulas. However, recall that we need to check the  unsatisfiability of the configuration SAT formula and the negation of the peering agreement. 
For the same reasons as when encoding BGP
configurations, the negation of the consumer's SMT formula can also be transformed to a
SAT formula in CNF. The resulting formula we call the \textit{agreement SAT
formula}.

\subsection{Problem Formulation}
After introducing the system model of \system{}, we next formally define the
privacy-preserving interdomain agreement verification problem.

\para{Notations}. 
The following conventions are followed to differentiate scalar, vector and
matrix variables. We use $v$ to represent a scalar and  $\mathbf{v}= \{v_1, v_2, \ldots,
v_{|\mathbf{v}|}\}$ to represent a row vector of size $|\mathbf{v}|$.
We use $\mathbf{V} = \{\mathbf{v}_1, \mathbf{v}_2, \ldots, \mathbf{v}_{|\mathbf{V}|}\}$, with $|\mathbf{v}_i|=|\mathbf{v}_j|$, for all $i\leq j\leq |\mathbf{V}|$
to represent a matrix of $|\mathbf{V}|$ rows. The element on the i-th row and $j$-th column of $\mathbf{V}$ is denoted by $\mathbf{V}_{ij}$. Unless explicitly
specified, operators on vectors or matrices indicate component-wise operations. 

A permutation $\pi$ is a bijective function: $\{1, \ldots, n\}\rightarrow \{1,
\ldots, n\}$, for $n\in\nats$. A \textit{row permutation} of a matrix
$\mathbf{V}$ is represented as $\{\mathbf{v}_{\pi^{-1}(1)}, \mathbf{v}_{\pi^{-1}(2)},
\ldots, \mathbf{v}_{\pi^{-1}(|\mathbf{V}|)}\}$ and written as $\pi(\mathbf{V})$.

A SAT formula in CNF of $m$ clauses (\ie, $c_1 \wedge c_2 \wedge ,\ldots, \wedge c_m$) over $n$ Boolean variables $x_1, x_2, \ldots, x_n$, is represented by an $n$-by-$m$ matrix $F$, where each element $\mathbf{F}_{ij}$ of $~\mathbf{F}$, consists of two bits. The first bit, denoted
as $\mathbf{F}_{ij}.O$, is 1 if variable $x_i$ occurs in clause $c_j$, and is 0
otherwise.
The second bit, denoted as $\mathbf{F}_{ij}.P$, is 1 if and only if  the variable $x_i$  occurs as a positive literal (\ie, $x_i$) in clause $c_j$.

Given a pair of peering ASes, $A$ and $B$ denote the agreement consumer and the
agreement provider, respectively.  As stated in Section~\ref{sec:intro}, the
agreement SAT formula (\ie,  the negation of the agreement-to-verify) generated
by $A$ is represented as $\mathbf{F}^A$, and the configuration SAT formula
generated by the provider $B$ is represented as $\mathbf{F}^B$. The
privacy-preserving interdomain peering agreement verification problem is
defined as follows: 

\begin{problem}[Privacy-Preserving Interdomain Peering Agreement Verification
	Problem]\label{prob:ppsat}
	Given an agreement consumer $A$ with associated agreement SAT formula $\mathbf{F}^A$ and an agreement provider $B$ with an associated configuration SAT
	formula $\mathbf{F}^B$, the two parties $A$ and $B$ should decide if the formula $\mathbf{F}^A \wedge \mathbf{F}^B$ is
	satisfiable in a way such that $A$ (respectively $B$) does not know what $\mathbf{F}^B$ (respectively $\mathbf{F}^A$)  is. The agreement provider $B$
	satisfies the agreement requirements iff formula 
	$\mathbf{F}^A \wedge \mathbf{F}^B$ is unsatisfiable.
\end{problem}

\para{Security model}. 
This paper assumes a \textit{semi-honest} security model \cite{Garbledcircuits},
\ie, the agreement customer and the agreement provider will not deviate from the
workflow specified in \system{}, but merely try to gather information during its
execution~\cite{raykova2012secure}.  This is sufficient for multiple scenarios
of interdomain routing, including commercial Internet~\cite{netreview},
collaboration science networks where ASes collaboratively conduct common tasks
such as data transfers~\cite{lhc}, and military coalition
networks~\cite{mishra2017comparing}.

\section{Privacy-Preserving SAT Solver}\label{sec:algorithm}
This section presents details on our instantiation of privacy-preserving SAT
solver in \system{} to verify interdomain peering agreements in a
privacy-preserving manner. We first provide a short review on the
theoretical foundations this solver is built upon, and then present the basic
idea and details of our solver.
We also conduct rigorous analysis on the
correctness, efficiency and privacy-preservingness of the solver.

\subsection{Preliminaries}

\para{DPLL algorithm}. \system{}'s
privacy-preserving SAT solver is based on the classic DPLL algorithm, as many
modern SAT solvers do~\cite{malik2009boolean, gomes2008satisfiability, sorensson2005minisat, z3}. In essence, DPLL is a backtracking search algorithm. 
Given a SAT formula in CNF represented by a matrix $\mathbf{F}$, the algorithm
runs by choosing a Boolean variable $x_i$, assigning a truth value to it,
simplifying the formula by removing all clauses that are satisfied by the truth
value assignment of $x_i$ as well as removing all literals involving the
variable $x_i$ that are false under this assignment, and then recursively
checking if the simplified formula is satisfiable. If so, $\mathbf{F}$ is
satisfiable. Otherwise, the algorithm uses the same recursive check by assigning
the opposite truth value to $x_i$. To improve the efficiency of this
backtracking search process, many optimizations are introduced
~\cite{malik2009boolean, gomes2008satisfiability, sorensson2005minisat}. One key
optimization that most modern SAT solvers adopt into the original DPLL
 is \textit{unit literal search
and resolution}, which attempts to find a unit clause that contains only one
unassigned variable, and assigns a truth value to this variable to satisfy this
clause, before attempting anything else. This optimization substantially reduces
the search space, leading to significant improvement in search efficiency.

\para{Secure multi-party computation (SMPC)}. SMPC refers to a set of frameworks for multiple parties to jointly compute a function over
their inputs while keeping those inputs private~\cite{Garbledcircuits}. Although there is a
rich set of SMPC related techniques, this paper focuses on the following two that are adopted in the proposed privacy-preserving SAT solver.


\begin{itemize}[leftmargin=*]
\item \textbf{Garbled circuits}. This is a cryptographic protocol, introduced in~\cite{yao1986generate}, that allows two mistrusting
parties to jointly evaluate any function that can be expressed as a Boolean
circuit over their private inputs. An example application of this protocol is
		the Millionaires' Problem, where the goal is for two parties to
		decide who has more money without exposing how much money they
		each have~\cite{Garbledcircuits}. The output of a garbled
		circuit can be revealed to both parties, only one specific
		party, or no party but be taken as inputs of other garbled
		circuits. For more details, readers may refer
		to~\cite{detailGC}.
\item \textbf{Oblivious shuffling}. This technique \cite{oblivioushuffling}
	enables two parties to collaboratively decide a permutation $\pi$ over a vector of cipher texts  $\mathbf{\Phi} = \{\phi_1, \phi_{2}, \ldots, \phi_{|\mathbf{v}|}\})$ such that $\phi_i$ is the cipher text of some $v_i$. The protocol provides that (1) the output vector of this protocol is the vector $ \{\phi_{\pi(1)}, \phi_{\pi(2)}, \ldots, \phi_{\pi(|\mathbf{v}|)}\}$ where $\phi_{\pi(i)} $ is the cipher texts of $v_{\pi(i)}$; (2) the permutation $\pi$ is unknown to both parties unless they collaborate; (3) the plain text $v_1, \cdots, v_n$ is never revealed to any party during the protocol. 

\end{itemize}

\subsection{Privacy-Preserving SAT Solver: Details}\label{subsec: ppsat}
This subsection first describes the basic idea behind the privacy-preserving SAT
solver, followed by the technical details of its key components.

\para{Limitations of garbled circuits}. 
A strawman solution to Problem~\ref{prob:ppsat} is to construct a garbled circuit of the entire DPLL algorithm, because it can be expressed as a Boolean circuit. However, the resulting circuit can blow up, especially for the algorithms such as DPLL of which the circuit size is exponential to the size of inputs.  
Such an explosion of complexity can be overcome by revealing a part of the information,  exposing the history of backtracking-search, or leaving the access pattern of private inputs of $\A$ and $\B$ outside the garbled circuit. However, the privacy leak of this solution can be substantial and lead to the complete exposure of $\mathbf{F}^\A$ and $\mathbf{F}^\B$. 



	


\para{Basic idea}: 
The proposed SAT solver addresses this tradeoff between the high overhead of
accessing private inputs in the garbled circuit and the severe privacy leakage
of accessing private inputs outside the garbled circuit with a simple, yet
elegant idea: instead of hiding the search history with high overhead, expose a disguised search history that no party can recover independently.

Specifically, the solver consists of two key steps: \textit{permute} and
\textit{garble}. 
First, two parties construct and permute a matrix corresponding to $\F^\A \land \F^\B$ through
the oblivious shuffling algorithm~(Algorithm~\ref{alg:os}), after which both permutation and entries of the matrix are shared. 
Then, two parties use garbled circuit to implement the computation
operations in DPLL, while all indexes being accessed are revealed and become
public~(Algorithm~\ref{alg:gc-dpll}). As such, 
the high overhead of accessing private inputs in the
garbled circuit is avoided. In addition, only \emph{the structure
of the backtracking-search steps} is exposed as the permutation is unknown to any single party. More precisely, the history of the
search is exposed, but it is a history of permuted Boolean literals and thus
no party knows which literal of original formula is being
accessed. It is more difficult for $\A$ and $\B$ to infer $\mathbf{F}^\B$ and
$\mathbf{F}^\A$ by observing this permuted search pattern. The steps of the
privacy-preserving SAT solver are presented as follows:


\para{Step 0: Initialization (Algorithm~\ref{alg:init})}.
The solver starts by zero-padding and aligning
$\mathbf{F}^\A$ and $\mathbf{F}^{B}$. This is because during the modeling phase,
$\A$ and $\B$ each independently introduce auxiliary Boolean variables to
transform their SAT formulas into CNF. As such, $\A$ and $\B$ each reveal to
each other the number of auxiliary Boolean variables they introduced. Then
$\mathbf{F}^{\A}$ and $\mathbf{F}^{\B}$ are both "stretched" to the same number
$n$ of rows such that the row vectors $\mathbf{f}_i^{\A}$ and $\mathbf{f}_i^{\B}$
record the occurrence and polarity of the same Boolean variable $x_i$ in
$\mathbf{F}^{\A}$ and $\mathbf{F}^{\B}$, separately. Given an auxiliary Boolean
variable introduced by $\A$, its corresponding row vector in $\mathbf{F}^{\B}$
is a zero vector. 
We denote $m^\A$ and $m^\B$ the number of clause of $F^\A$ and $F^\B$ respectively,  assume $\F^{\A}$ is of
$n$-by-$m^{\A}$, $\F^{\B}$ is of $n$-by-$m^{\B}$
and let $m=m^{\A}+m^{\B}$.

After $\mathbf{F}^\A$ and $\mathbf{F}^{B}$ are aligned, the solver uses a simple
procedure as shown in Algorithm~\ref{alg:init} to store an encrypted version of
the matrix $\{\mathbf{F}^\A \ \ \mathbf{F}^\B\}$ 
privately at $\A$ where that is the concatenation of matrix $\mat{F}^\A$ and matrix $\mat{F}^\B$. At the same time, it stores the encryption key $k_B$ used privately at $\B$.  
In this work, we explicitly for encryption use one-time pad with pseudorandom function (\textbf{PRF}) that takes the a key and inputs and outputs an almost random string of the same size as the input. We use element-wise application of $\textbf{PRF}$ when input is a matrix. That is $(\mathbf{PRF}(M))_{ij} =\mathbf{PRF}(M_{ij})$.
We refer interested readers to \cite{detailGC} for more details about \textbf{PRF} and the proof security of this encryption scheme. 


\begin{algorithm}
	\footnotesize
\SetAlgoLined
	\KwIn{$A$ and $B$ each holds its private formula $\F^\A$ and $\F^\B$}
\KwOut{$\A$ holds an encrypted version of $\{\mathbf{F}^\A \ \ \mathbf{F}^\B\}$, and 
   $\B$ holds the key $k_\B$}
	$\A$ locally a generates key $k_\A$ and an $n$-by-$m^{\A}$ random matrix $\mathbf{R}^{\A}$\;
	$\A$ sends an $n$-by-$m^{\A}$ matrix $\mathbf{\Psi}
	=\Enc{{\F^\A}}{k_\A}{\mathbf{R}^{\A}}$, 
	to $\B$\;
	$\B$ locally a generates key $k_{\B}$, an $n$-by-$m^{\A}$ random matrix
	$\mathbf{R}^{\B}_{\A}$, and an $n$-by-$m^{\B}$ random matrix $\mathbf{R}^{\B}_{\B}$\;
	$\B$ sends $\mathbf{\Phi}^{\A}=\Enc{\mathbf{\Psi}}{k_\B}{\mathbf{R^{\B}_{\A}}}$,
	$\mathbf{\Phi}^{\B} =\Enc{\F^\B}{k_B}{R^{\B}_{\B}}$,
	$\mathbf{R}^{\B}_{\A}$ and $\mathbf{R}^{\B}_{\B}$ then sends it back to $\A$\;
	$\A$ computes
	$ \Enc{ \mathbf{\Phi}^{\A}}{k_\A}{\mathbf{R}^{\A}} =
	\Enc{\F^{\A}}{k_\B}{\mathbf{R}^{B}_{A}}$\;
	$\A$ stores $\mathbf{R}^{\B}= \{
		\mathbf{R}^{B}_{A} \ \ \mathbf{R}^{B}_{B}\}$
and
	$\F_{k_\B, \mathbf{R}^{\B}} =
	\{\Enc{\F^{\A}}{k_\B}{\mathbf{R}^{B}_{A}} \ \
	\Enc{\F^{\B}}{k_\B}{\mathbf{R}^{B}_{B}}\} = \Enc{\{\F^{\A} \
	\F^{\B}\}}{k_\B}{\mathbf{R}^\B}$, and $\B$ stores $k_B$\;
\caption{Preparing for oblivious shuffling.}
\label{alg:init}
\end{algorithm}

\para{Step 1: Oblivious shuffling of private inputs (Algorithm~\ref{alg:os})}.
Given a matrix $\mathbf{V}$, the oblivious shuffling algorithm in the proposed
solver (Algorithm~\ref{alg:os}) decides a row permutation $\pi$ on $\mathbf{V}$,
and distributes the secret $(\pi, \pi(\mathbf{V})$ between $\A$ and $\B$.
Specifically, it takes $\mathbf{R}$ and $\Enc{\mathbf{V}}{k_B}{\mathbf{R}}$,
the information stored at $A$ at the end of the
preparation (Line 6 of Algorithm~\ref{alg:init}), where
$\mathbf{V}=\mathbf{F}=\{\F^\A \
\F^\B\}$ and $\mathbf{R}=\mathbf{R}^B$; and $k_\B$, the
information stored at $B$ (Line 7 of Algorithm~\ref{alg:init}), as the input. 
It then has $\A$ and $\B$ each introduce private random matrices (\ie,
$\mathbf{R}^\A$ by $\A$ and $\mathbf{R}^1$, $\mathbf{R}^2$ by $\B$) and private
permutations (\ie, $\pi_\A$ by $\A$ and $\pi_\B$ by $\B$), and achieves the goal
of oblivious shuffling by local permutations and secure evaluation of \textbf{PRF} via garbled circuits. 

The key ingredient of this shuffling is the math behind Line 8, where it can be
derived that $\mathbf{\Theta}= \pi_\A(\mathbf{V} \oplus \mathbf{\Gamma} \oplus
\mathbf{\Omega})$. With this result, in Line 11-12, $\A$ can store a piece of
secret $s_\A = \pi_\B \pi_\A(\mathbf{V}) \oplus \pi_\B(\mathbf{PRF}(k_\B,
\mathbf{R}^1))$, and $\B$ can store another piece $s_\B = \pi_\B(\mathbf{PRF}(k_\B,
\mathbf{R}^1))$. As such,  the secret can only be recovered as  $\pi=s_\A \oplus
s_\B = \pi_\B
\pi_\A$, and $\pi(\mathbf{V})=\pi_\B \pi_\A(\mathbf{V}$ when both pieces are
combined together.  During this shuffling, no input from $A$ ( \ie, $\mathbf{R}$
and $\Enc{\mathbf{V}}{k_B}{\mathbf{R}}$) or $B$ ($k_\B$) is exposed to the other
party.

In addition to permuting the SAT formula matrix $\F = \{\F^\A \ \F^\B\}$ by
row with $\pi$ and distributing this secret between $\A$ and $\B$,
The proposed SAT solver also uses Algorithms~\ref{alg:init} and~\ref{alg:os} to 
permute two row vectors $\pr$ and $\as$ with $\pi$ and
distribute each resulting secret between $\A$ and $\B$. 
These two vectors represent the branching search strategy specified by the agreement provider
$\B$. Specifically, $\pr$ represents the priorities of all Boolean
variables in the verification formula $\F$. During the DPLL search, when there is no more unit literal
in the formula, the variable that is not decided and has the highest priority will be
selected as the next variable $x_i$ and first be assigned a truth value of $\asi_i$
to simplify the formula. How $\pr$ and $\as$ are computed at $\B$ is out of
the scope of the proposed solver.~\footnote{We note that the proposed solver may be more
efficient if the search strategy is decided by $\A$ and $\B$ collectively, and
this is left for future work.}

\begin{algorithm}[t]
	\footnotesize
\SetAlgoLined
	\KwIn{$\A$ holds $\mathbf{R}$ and
	$\Enc{\mathbf{V}}{k_B}{\mathbf{R}}$; $\B$ holds $k_\B$;
}
\KwOut{
	$\A$ holds  $\pi_\B \pi_\A(\mathbf{V}) \oplus \pi_\B(\mathbf{PRF}(k_\B,
	\mathbf{R}^1))$\;
	$\B$ holds $\pi_\B(\mathbf{PRF}(k_\B, \mathbf{R}^1))$;
}
	$\A$ generates a key $k_\A$, a random matrix $\mathbf{R}^\A$, and a
	permutation $\pi_\A$\;
	$\A$ computes
	$\mathbf{\Sigma} = \pi_\A(\Enc{\Enc{\mathbf{V}}{k_B}{\mathbf{R}}}{k_\A}{\mathbf{R}^\A})$,
	$\pi_\A(\mathbf{R}^\A)$ and $\pi_\A(\mathbf{R})$\; 
	$\A$ sends
	$\Sigma$
	and $\pi_\A(\mathbf{R}^\A)$ to $B$\;
	$\B$ generates two random matrices $\mathbf{R}^1$ and $\mathbf{R}^2$\;		
	$\A$ and $\B$ construct two garbled circuits to collectively compute
	$\mathbf{\Delta}=\mathbf{PRF}(k_\B, \pi_A(\mathbf{R}))$ and 
	$\mathbf{\Gamma}=\mathbf{PRF}(k_\A, \mathbf{R}^2)$, respectively,
	without revealing the input to each other;
	and the circuits do not release the results to any party\;
	$\A$ computes $\mathbf{\Lambda}= \mathbf{PRF}(k_\A,
	\pi_A(\mathbf{R}^\A))$\;
	$\B$ computes $\mathbf{\Omega}= \mathbf{PRF}(k_\B, \mathbf{R}^1)$\;
	$\A$ and $\B$ construct a garbled circuit taking the output of the
	previous circuits to collectively compute	
	$\mathbf{\Theta} =\Sigma
	\oplus \mathbf{\Delta} \oplus \mathbf{\Gamma} \oplus \mathbf{\Lambda}
	\oplus \mathbf{\Omega}$, which equals to $\pi_\A(\mathbf{V}) \oplus
	\mathbf{\Gamma} \oplus \mathbf{\Omega}$, and the
	circuit only release $\mathbf{\Theta}$ to $\B$\;
	$\B$ generates a permutation $\pi_\B$, computes $\pi_\B(\mathbf{\Theta}) = \pi_\B \pi_\A (\mathbf{V})
	\oplus \pi_\B(\mathbf{\Gamma}) \oplus \pi_\B(\mathbf{\Omega}))$, $\pi_\B(\mathbf{R}^1)$, and
	$\pi_\B(\mathbf{R}^2)$, and sends them to $\A$\;
	$\A$ computes $\mathbf{PRF}(k_\A, \pi_\B(\mathbf{R}^2))
	=\pi_\B(\Gamma)$\;
	$\A$ stores $s_\A(\mathbf{V}) = \pi_\B(\mathbf{\Theta}) \oplus \pi_\B(\mathbf{\Omega}) = \pi_\B \pi_\A(\mathbf{V}) \oplus
	\pi_\B(\mathbf{PRF}(k_\B, \mathbf{R}^1))$\; 
	$\B$ stores $s_\B(\mathbf{V}) = \pi_\B(\mathbf{PRF}(k_\B, \mathbf{R}^1))$\;
\caption{Oblivious shuffling over private inputs.}
\label{alg:os}
\end{algorithm}

\para{Step 2: A DPLL garbled circuit with exposed, permuted search history
(Algorithm~\ref{alg:gc-dpll})}.  
After using an oblivious shuffling algorithm to permute the verification formula
$\F$ and the branching search strategy $(\pr, \as)$, the proposed SAT solver
designs a garbled circuit that encodes the DPLL algorithm with an exposed, permuted search history~(Algorithm~\ref{alg:gc-dpll}).  Specifically, Algorithm~\ref{alg:gc-dpll} uses the same backtrack-searching process and the unit literal search optimization as in the original DPLL.\footnote{The proposed
solver does not use pure literal elimination because it is rarely used by modern DPLL-based
SAT solvers.}

\begin{algorithm}[t]
	\footnotesize
	\KwIn{Secrets $\pi(\F)$, $\pi(\pr)$ and $\pi(\as)$ that are distributed stored at $\A$ and $\B$;}
	$\textbf{u} = \{\text{false}, \cdots, \text{false}\}$ \; $\textbf{d} = \{\text{false}, \cdots, \text{false}\}$\;
    $\textbf{c} = \{\text{false}, \cdots, \text{false}\}$\;
	\tcp {$\textbf{u}, \textbf{d}, \in \{\text{true}, \text{false}\}^n$, $\textbf{c} \in \{\text{true}, \text{false}\}^m$} 
	$T$ is an empty stack\;
\tcp{$T$ is the stack that is initialized to be empty}
$i = 0$ \;
\While {\text{true}}{
    \If {$i \neq 0$}{
        \textbf{OblivRes}($i,\pi(\F),\pi(\as), \textbf{c}$)\;
        $([b_s], [b_c]) = \mat{OblivCC}(\pi(\F), )$\;
        $[b_c].\text{release}(\A, \B)$\;
        $[b_s].\text{release}(\A, \B)$\;
        \eIf {$b_c$} {
            \While { $\neg T.\text{empty} $}{
              $(\pi(F), i, c, d,\pi(\as), state) = T.\text{pop}()$\;
                \If {$\text{state} == \textit{FIRST}$}{
                    \textbf{break}}
            }
            \If {$T.\text{empty}$}{
                \Return \text{false}
                }
          $[\pi(\as)_i] = [\neg \pi(\as)_i]$\;
            $T.\text{push}(\pi(\F), i, \textbf{c}, \textbf{d},\pi(\as), \text{SECOND})$\;
            \textbf{continue}
            }
        { \If{$b_s$}{
            \Return \text{true}
          }
        }
    }

    $[\text{ind}] = \textbf{OblivULS}(\pi(\F),\pi(\pr), \textbf{u},\pi(\as))$\;
    $\text{[ind].release}(\A, \B)$\;
    \eIf {$ind \neq 0$}{
        $i = \text{ind}$ \;
        $d_i = \text{true}$
        }
    {
        $[i] = \textbf{OblivBranch}(\pi(\pr), \textbf{d})$\;
        $[i].\text{release}(\A, \B)$\;
        $d_i = \text{true} $\;
        $T.\text{push}(\pi(\F), i, \textbf{c}, \textbf{d},\pi(\as), \text{FIRST})$\;
        }
    $i = 1$\
    ;
}
\caption{Garbled-Circuit DPLL}
\label{alg:gc-dpll}
\end{algorithm}

Compared with the DPLL algorithm, the key differences in
Algorithm~\ref{alg:gc-dpll} are the use of oblivious subroutines, \ie,
resolution (Line 8)~\ref{alg:or}, 
contradiction detection and checking (Line 9)~\ref{alg:oc},
unit literal search (Line 25)~\ref{alg:osul}, and branching (Line
31)~\ref{alg:ob}, which are all
encoded by garbled circuits. Specifically, the operator $[\cdot]$ is used in
Algorithm~\ref{alg:gc-dpll} to indicate that the bracketed expression is
computed using a garbled circuit and the result is not released to either $\A$
or $\B$ unless explicitly specified using the \emph{release} keyword. 





\begin{algorithm}[t]
 	\footnotesize
 \SetAlgoLined
\KwIn{$i_0, [\pi(\F)], [\pi(\as)],[\textbf{c}]$}
\KwOut{\text{None}}
\For {j = 1 to m }{
 $[b] = [\pi(\as)_{i_0}]$\;
 $[\text{cond}_0] = [\neg \pi(\F)_{i_0, j}.\OC ]$\;

 $[\text{cond}_1] = [(b \neq  \pi(\F)_{i_0, j}.P)\cdot \pi(\F)_{i_0, j}.\OC ]$\;
 $[\text{cond}_2] = [(b == \pi(\F)_{i_0, j}.P)\cdot \pi(\F)_{i_0, j}.\OC ]$\;

\For {i = 1 to n }{
 $[\text{cond}_3] = [i == i_0 ]$\;

 $[\pi(\F)_{i,j}.\OC] = [ \text{cond}_0 \cdot \pi(\F)_{i,j}.\OC +\text{cond}_2.0 + \text{cond}_1\cdot \text{cond}_3\cdot 0 + \text{cond}_1\cdot (1-\text{cond}_3)\cdot \pi(\F)_{i,j}.\OC ]$\;
}
$\textbf{c}_j = \text{cond}_2$
}
\caption{\textbf{OblivRes}: oblivious resolution.}
\label{alg:or}
\end{algorithm}

Specifically, the algorithm takes $\pi(\F)$, $\pi(\pr)$ and $\pi(\as)$, the permuted verficiation formula and search strategy from the oblivious shuffling
step, as input. These inputs are split into two secrets and distributed over $\A$ and $\B$, As such, for the simplicity of presentation without causing confusion, when these inputs show in the psuedocode, they represent the operations of combining the secrets held at $\A$ and $\B$.
For example, if $\pi(\F)$ is distributed as $s_A(\F)$ to $\A$ and $s_B(\F)$ to $\B$, where  $s_A(\F) \oplus s_B(\F)$, then $[\pi(\F)]$
represents $[s_A(\F) \oplus s_B(\F)]$.

Moreover, two parties also initialize  other three vectors: secret shared $\textbf{u}$ that represents whether the literals are unit or not; public $\textbf{d}$ that records if the values of the literals have been decided, and $\textbf{c}$ that records if the clauses have been removed (satisfied). Last but not least, a stack $T$ is  also instantiated for backtracking later.

 \begin{algorithm}[t]
 \footnotesize

 \SetAlgoLined
\KwIn{$ [\pi(\F)],[\textbf{c}]$}
\KwOut{$[b_s],[b_c]$}
$[b] = \sf false$\;
$[s] = 0 $\;
\For {j = 1 to m }{
$[z] = 0$ \;
\For {i = 1 to n }{
 $[z] = [z+ \FS_{i,j}.\OC \cdot 1]$\;
}
$[b] = [b \lor (z == 0 \land \neg \textbf{c}_j)]$ \; 
$[s] = [s + z]$
}
\Return $([s== 0], [b])$
\caption{\textbf{OblivCC}: oblivious contradiction checking.}
\label{alg:oc}
\end{algorithm}

\subsection{Privacy-Preserving SAT Solver: Analysis}\label{sec:analysis}
We conduct rigorous analysis on the correctness, privacy and
efficiency on the proposed privacy-preserving SAT solver.
The security and correctness of this solver rely on the
security and correctness of the oblivious shuffling phase (\ie,
Algorithm~\ref{alg:os}) and the way it performs
permutation. We hereby state them first here:
\begin{thm}[\textbf{Correctness and Privacy-Preserving of Oblivious
	Shuffling}] 
	\label{pisecure}
Given a pseudorandom function (\textbf{PRF}) and  input  $(\EC{\mathbf{V}}{k_\B}{R}, R)$ , Algorithm~\ref{alg:os}:
	\begin{itemize}[leftmargin=*]
    \item outputs $s_\A(\mathbf{V})$ and $s_\B(\mathbf{V})$  satisfying $s_\A(\mathbf{V}) \oplus  s_\B(\mathbf{V}) = \pi_\B \pi_\A(\mathbf{V})  $
    \item where $\pi_\A$ (respectively $\pi_\B$) is unknown to $\B$  ( respectively $\A$);
    \item keeps the $v_i$ unrevealed. In particular, given two vectors $\mat{V}$ and $\mat{V'}$, there is no probabilistic polynomial time $\A$ or $\B$  that is able to individually distinguish the algorithm is performed with  $(\EC{\mathbf{V}}{k_\B}{R}, R)$ or $(\EC{\mathbf{V'}}{k_\B}{R}, R)$ as an input.
    \end{itemize}
\end{thm}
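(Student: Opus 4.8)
The plan is to prove the three items in turn: item~1 (together with the ``unknown permutation'' skeleton of item~2) by a direct algebraic trace through Algorithm~\ref{alg:os}, and the remaining privacy claims by simulation-based hybrid arguments that invoke only the pseudorandomness of $\mathbf{PRF}$ and the standard semi-honest simulatability of garbled circuits. For correctness I would unfold $\EC{a}{b}{c}=a\oplus\mathbf{PRF}(b,c)$ and use that a row permutation commutes with component-wise $\oplus$ and with the component-wise $\mathbf{PRF}$, i.e.\ $\pi(\mathbf{PRF}(k,M))=\mathbf{PRF}(k,\pi(M))$. Then $\mathbf{\Sigma}=\pi_\A(\mathbf{V})\oplus\mathbf{\Delta}\oplus\mathbf{\Lambda}$, so the opened circuit output is $\mathbf{\Theta}=\mathbf{\Sigma}\oplus\mathbf{\Delta}\oplus\mathbf{\Gamma}\oplus\mathbf{\Lambda}\oplus\mathbf{\Omega}=\pi_\A(\mathbf{V})\oplus\mathbf{\Gamma}\oplus\mathbf{\Omega}$ since $\mathbf{\Delta}$ and $\mathbf{\Lambda}$ each occur twice and cancel. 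Applying $\pi_\B$ and cancelling the $\pi_\B(\mathbf{\Gamma})=\mathbf{PRF}(k_\A,\pi_\B(\mathbf{R}^2))$ that $\A$ recomputes locally gives $s_\A(\mathbf{V})=\pi_\B\pi_\A(\mathbf{V})\oplus\pi_\B(\mathbf{PRF}(k_\B,\mathbf{R}^1))$ and $s_\B(\mathbf{V})=\pi_\B(\mathbf{PRF}(k_\B,\mathbf{R}^1))$, whose $\oplus$ is $\pi_\B\pi_\A(\mathbf{V})$; the same trace exhibits that $\pi_\A$ enters $\B$'s state (resp.\ $\pi_\B$ enters $\A$'s state) only through the product $\pi_\B\pi_\A$.

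For the ``unknown permutation'' claim in full, I would enumerate every message a corrupted $\B$ receives that depends on $\pi_\A$ --- namely $\mathbf{\Sigma}$, $\pi_\A(\mathbf{R}^\A)$, and (the only opened output of the last garbled circuit) $\mathbf{\Theta}$ --- and show each is distributed independently of $\pi_\A$ in $\B$'s view. Now $\pi_\A(\mathbf{R}^\A)$ is a fresh uniform matrix permuted, hence still uniform; $\mathbf{\Sigma}$ additionally carries the mask $\mathbf{PRF}(k_\A,\pi_\A(\mathbf{R}^\A))$ that $\B$ cannot evaluate (it lacks $k_\A$), so by $\mathbf{PRF}$ security $\mathbf{\Sigma}$ is indistinguishable from uniform to $\B$; and $\mathbf{\Theta}$, even after $\B$ removes its own $\mathbf{\Omega}=\mathbf{PRF}(k_\B,\mathbf{R}^1)$, still carries the mask $\mathbf{\Gamma}=\mathbf{PRF}(k_\A,\mathbf{R}^2)$ it cannot strip. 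The garbled-circuit transcripts add nothing because, for semi-honest $\B$, the whole garbled-circuit sub-protocol is simulatable from $\B$'s inputs and its single released output $\mathbf{\Theta}$. The argument that $\pi_\B$ is unknown to a corrupted $\A$ is symmetric, with the roles of $k_\A$ and $k_\B$ swapped and $\pi_\B(\mathbf{R}^1),\pi_\B(\mathbf{R}^2)$ serving as the fresh permuted matrices.

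For the indistinguishability of $\mathbf{V}$ from $\mathbf{V'}$ I would run, on each side, a short hybrid chain. Starting from the real execution with plaintext $\mathbf{V}$: (i) replace the garbled-circuit sub-protocol by its semi-honest simulator, fed only the corrupted party's inputs and released outputs --- justified by garbled-circuit security; (ii) replace $\mathbf{PRF}(k_\A,\cdot)$ against a corrupted $\B$ (resp.\ $\mathbf{PRF}(k_\B,\cdot)$ against a corrupted $\A$) by a truly random function --- justified by $\mathbf{PRF}$ security, since that key is never learned by the corrupted party, so its entire view can be produced with oracle access to the key. In the resulting hybrid every occurrence of the plaintext in the corrupted party's view ($\mathbf{\Sigma}$ and $\mathbf{\Theta}$ for $\B$; the input ciphertext $\EC{\mathbf{V}}{k_\B}{\mathbf{R}}$ and the share $s_\A(\mathbf{V})$ for $\A$) is hidden by a fresh independent uniform string, because the nonces at which each key is evaluated --- the entries of $\mathbf{R},\mathbf{R}^\A,\mathbf{R}^1,\mathbf{R}^2$ --- are pairwise distinct except with negligible probability. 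Hence the hybrid is statistically independent of the plaintext, so it equals the analogous hybrid built from $\mathbf{V'}$; summing the hybrid gaps yields the claim.

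The step I expect to be the real work is the bookkeeping in this last chain: for each corrupted party one must pin down exactly which $\mathbf{PRF}$-keyed masks and which garbled circuits touch the plaintext, check that simulating the two inner circuits --- whose unopened output wires feed the final circuit --- does not obstruct simulating the final one, and verify that no key is ever used to mask two different quantities under the same nonce, since otherwise a party could cancel a one-time pad and the reduction would break. Making the ``fresh mask'' statement precise, i.e.\ bounding the collision probability among the entries of the four random matrices, is the only genuinely probabilistic estimate needed, and it is routine.
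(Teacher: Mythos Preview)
Your proposal is correct and follows the same approach the paper sketches: the correctness part is the identical algebraic trace (the paper only writes the final $\oplus$-cancellation, you have filled in the intermediate identity $\mathbf{\Sigma}=\pi_\A(\mathbf{V})\oplus\mathbf{\Delta}\oplus\mathbf{\Lambda}$), and for privacy the paper merely states that one reduces any distinguisher to an attack on either the one-time-pad/\textbf{PRF} or the garbled-circuit protocol, which is exactly what your two-step hybrid chain (simulate the garbled circuits, then swap the honest party's \textbf{PRF} for a random function) carries out. Your enumeration of the messages each corrupted party sees and your nonce-collision bookkeeping are the natural instantiation of that reduction, not a different route.
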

\vspace{-0.5em}
\begin{proof}(Sketch)
For the first point on correctness, it can be shown as 
	{\footnotesize
	\[
    s_\A(\mathbf{V}) \oplus  s_\B(\mathbf{V}) = \pi_\B(\mathbf{PRF}(k_\B, \mathbf{R}^1)) \oplus \pi_\B \pi_\A(\mathbf{V})= \pi_\B \pi_\A(\mathbf{V}) 
\]}
For the last two points 
	on privacy, the basic idea is that the security of this oblivious
	shuffling relies on security of one-time pad encryption and garbled
	circuits that is proved in \cite{detailGC}. The central idea of our
	proof is reduction from the attacks against the privacy of this
	oblivious shuffling scheme to the attacks against either garbled
	circuits or the one-time pad encryption. 
\end{proof}


\vspace{-0.5em}
\para{Correctness}:   
The correctness of our SAT solver is a natural result of the basic idea behind our algorithm. Notice that permuting the literals will not change the satisfiablity of a formula, to prove the correctness of our protocol, it is sufficient to show that algorithm ~\ref{alg:gc-dpll} indeed implements DPLL. 
\begin{thm} [\textbf{Correctness}]\label{correctness} 
Given any SAT propositional formula $F$ in CNF as input, it is satisfiable if and only if algorithm  ~\ref{alg:gc-dpll} returns $true$. 
\end{thm}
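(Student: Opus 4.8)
The plan is to prove Theorem~\ref{correctness} by establishing that Algorithm~\ref{alg:gc-dpll}, stripped of its cryptographic wrapping (the $[\cdot]$ operator and the \emph{release} calls), is merely a reformulation of the classical DPLL backtracking search, and then invoke the well-known soundness and completeness of DPLL. Since Theorem~\ref{pisecure} already guarantees that the oblivious shuffling distributes $\pi(\F)$, $\pi(\pr)$, $\pi(\as)$ faithfully among $\A$ and $\B$ and that combining the secret shares recovers the true permuted formula, and since permuting the rows (variables) of a CNF matrix does not change satisfiability, it suffices to argue on the plaintext level: show that on input the permuted matrix $\pi(\F)$, the control flow of Algorithm~\ref{alg:gc-dpll} faithfully simulates DPLL with unit propagation on $\pi(\F)$.

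First I would fix notation and set up a loop invariant for the \texttt{while(true)} loop at Line~7. The natural invariant ties the auxiliary vectors to a partial assignment: $\textbf{d}$ marks which variables have been decided (by branching or by unit resolution), $\pi(\as)$ records the current truth value chosen for each decided variable, $\textbf{c}$ marks which clauses have already become satisfied under the current partial assignment, and the stack $T$ records the sequence of branching points together with enough state to undo them. Concretely I would state: at the top of each iteration, the pair $(\textbf{d},\pi(\as))$ encodes a partial assignment $\sigma$, the residual formula obtained from $\pi(\F)$ by deleting satisfied clauses (those with $\textbf{c}_j$ true) and removing falsified literals is exactly the formula DPLL would be examining after the same sequence of decisions, and each stack frame tagged \textit{FIRST} corresponds to a DPLL decision variable whose opposite branch has not yet been explored. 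The proof then proceeds by induction on the number of loop iterations, checking that each of the oblivious subroutines preserves this invariant: \textbf{OblivRes} (Algorithm~\ref{alg:or}) performs exactly unit-literal resolution/simplification for the just-assigned variable $x_{i_0}$ — deleting clauses satisfied by the assignment (setting $\textbf{c}_j$) and zeroing out the occurrence bits of the falsified literal; \textbf{OblivCC} (Algorithm~\ref{alg:oc}) returns $b_c=\true$ iff some clause has become empty (a conflict) and $b_s=\true$ iff every clause is satisfied; \textbf{OblivULS} returns a nonzero index iff a unit clause exists and then that index is the forced variable; and \textbf{OblivBranch} picks the undecided variable of highest priority according to $\pi(\pr)$. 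Given these four local correctness facts, the branch logic (Lines~12--24 for backtracking on conflict, Lines~25--34 for decide/propagate) is seen to be precisely the DPLL recursion unrolled into an explicit stack: on conflict we pop frames until we find a \textit{FIRST} frame (an unexplored opposite branch), flip its assignment, re-push it as \textit{SECOND}, and continue; if no such frame exists we return \false; if the formula is fully satisfied we return \true.

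The two directions of the "if and only if" then follow from the two directions of DPLL correctness. For soundness — if Algorithm~\ref{alg:gc-dpll} returns \true{} then $F$ is satisfiable — the loop invariant tells us that at the moment $b_s$ is observed true, $\textbf{c}$ marks all $m$ clauses as satisfied, so the partial assignment $\sigma$ encoded by $(\textbf{d},\pi(\as))$ (extended arbitrarily on undecided variables) satisfies $\pi(\F)$, hence $F$. For completeness — if $F$ is satisfiable then the algorithm returns \true{} — I would argue termination plus exhaustiveness: each iteration either decides a new variable or flips a branch and pops at least one \textit{FIRST} frame, so the stack of \textit{FIRST} frames behaves like a binary counter over the $n$ variables and the search tree is finite; the algorithm can only return \false{} after exhausting (via backtracking) every branch, i.e., after implicitly refuting every total extension, which contradicts satisfiability. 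Therefore on a satisfiable input it must instead hit the $b_s$ case and return \true.

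The main obstacle I expect is not the high-level DPLL correspondence but the bookkeeping: verifying that \textbf{OblivRes} and especially the stack-save/restore discipline (what exactly is pushed in a \textit{FIRST} versus \textit{SECOND} frame, and that popping restores $\pi(\F)$, $\textbf{c}$, $\textbf{d}$, $\pi(\as)$, $i$ to a state consistent with the invariant) genuinely reconstructs the residual formula DPLL would hold at that node — in particular that the destructive updates to $\pi(\F).\OC$ inside \textbf{OblivRes} are correctly snapshotted and rolled back, and that the re-entry points (the assignment $i=1$ at Line~35, and the guard $i\neq 0$ at Line~8) thread the control flow so that after a branch decision the next iteration immediately runs resolution on the newly assigned literal. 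A secondary, easier point is to note that the algorithm deliberately omits pure-literal elimination (as the footnote says), which is sound to drop since unit-propagation-only DPLL is still complete. Once the invariant is pinned down precisely, each of these is a routine case check, so I would state the invariant carefully and then relegate the per-line verification to a short enumerated list rather than prose.
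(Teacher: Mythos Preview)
Your proposal is correct and takes essentially the same approach as the paper: reduce to the observation that permuting variables preserves satisfiability, then argue that Algorithm~\ref{alg:gc-dpll} is a faithful iterative/stack-based implementation of DPLL with unit propagation, so DPLL's soundness and completeness transfer. The paper's own proof is only a sketch at this level, whereas you go further by articulating an explicit loop invariant and flagging the stack save/restore bookkeeping as the point needing care; that additional rigor is welcome but not a different strategy.
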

\begin{proof}(Sketch)  
	The basic idea of the proof is to show that Algorithm~\ref{alg:gc-dpll}
	implements DPLL. 
	Essentially, we conduct as many \textbf{OblivULS}
	(Algorithm~\ref{alg:osul} 
	as possible until no unit literal is available (corresponding to searching unit literal and removing clauses). 
	Then, \textbf{OblivBranch} (Algorithm~\ref{alg:ob}) is called and
	current state is pushed into stack, after which \textbf{OblivRes}
	(Algorithm~\ref{alg:or})and \textbf{OblivCC} (Algorithm~\ref{alg:oc}) 
	are called to simplify $\pi(\F)$ and test validity of our guess (corresponding to the branching).
Whenever a failure ($b_c= \text{false}$) arises, we backtrack the searching tree
	until a \textit{first} guess is found by popping out the stack (corresponding
	backtracking).  On the other hand, if a success is met,  \ie\ $b_s=
	\text{true}$, the algorithm terminates and outputs true. Since each
	variable has at most 2 possible assignments, the algorithm either
	eventually returns true at some point, or it exhausts all possible
	assignments and reaches the bottom of stack $T$ then returns false. With
	this sketch, a formal proof can be easily derived. 
\end{proof}


\begin{algorithm}[t]
\footnotesize

\SetAlgoLined
\KwIn{$\pi(\F),  [\pi(\pr)], [\pi(\as)], [\textbf{u}]$}
\KwOut{$[\text{ind}]$  such that $\text{ind} \in \{1,\cdots, n\}$}
\For {$j = 1$ to $m $}
{$[b_j] = 0$\;
\For {$i = 1$ to $n$}
 { $[b_j] = [~b_j + \pi(\F)_{i,j}.\OC]$}
\For {$i = 1$ to $n$}
{$[\text{cond}] = [(b_j ==1) \land \pi(\F)_{i,j}.\OC ==1)] $\;
$[\textbf{u}_i] = [\text{cond} \cdot 1 + \neg \text{cond}\cdot \textbf{u}_i]$\;
$[\pi(\as)_i] = [ \textbf{u}_i \cdot  \pi(\F)_{i,j}.P + (1-\textbf{u}_i)\cdot \pi(\as)_i]$\;
}
}

$[\text{ind}] = [0]$\;
$[\text{pri}] = [0]$\;
\For {i = 1 to n}{
 $[\text{cond}]=[(\textbf{u}_i == 1) \cdot (\pi(\pr)_i > \text{pri})]$\;
 $[\text{ind}] = [\text{cond}\cdot i + \text{ind} \cdot(1-\text{cond})]$ \;
 $[\text{pri}] = [\text{cond} \cdot \pi(\pr)_i+ \text{pri} \cdot(1-\text{cond})]$\;
} 
\Return $[\text{ind}]$
	\caption{\textbf{OblivULS}: oblivious unit literal search.}
\label{alg:osul}
\end{algorithm}

\begin{algorithm}[t]
\footnotesize

\SetAlgoLined
\KwIn{$[\pi(\pr)], \textbf{d}$}
\KwOut{$[\text{ind}']$  such that $\text{ind} \in \{1,\cdots, n\}$}
$[\text{ind}'] = [0]$\;
$[\text{pri}] = [0]$\;
\For{i = 1 to n}{
 $[\text{cond}]=[\neg \textbf{d}_{i} \cdot (\pi(\pr)_i > \text{pri})]$ \;
 $[\text{ind'}] = [\text{cond} \cdot i+ \text{ind}' \cdot(1-\text{cond})]$\;
 $[\text{pri}] = [\text{cond} \cdot \pi(\pr)_i+ \text{ind}' \cdot(1-\text{cond})]$ \;
}
\Return $\text{ind}'$ \;
\caption{\textbf{OblivBranch}: oblivious branching.}
\label{alg:ob}
\end{algorithm}



\para{Security}. We give the following theorem on the information leakage of our
solver.

\begin{thm}[\textbf{Information Leakage}]
The algorithm \ref{alg:gc-dpll} realizes secure MPC for deciding the
	satisfiability of $F^\A\land F^\B $ with a leakage profile of: 
		(1) the number of common variables of $F^\A$ and $F^\B$, 
     (2) the number of variables of  $F^\A$ and $F^\B$, 
		(3) the size of the input formulas,
		and (4) the searching pattern.
\end{thm}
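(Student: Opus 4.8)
(Proof plan)
The plan is to argue this in the standard simulation-based paradigm for secure computation against a semi-honest adversary: for each corrupted party $P\in\{\A,\B\}$ I will build a probabilistic polynomial-time simulator $\mathrm{Sim}_P$ that, given only $P$'s private formula ($\F^\A$ or $\F^\B$), the final satisfiability bit, and the claimed leakage profile $L=(\ell_1,\ell_2,\ell_3,\ell_4)$ --- the number of common variables, the number of variables each side contributes, the input sizes, and the released search trace of Algorithm~\ref{alg:gc-dpll} --- outputs a transcript computationally indistinguishable from $P$'s real view. Correctness is not at stake, having been established in Theorem~\ref{correctness}; the whole job is to show nothing beyond $L$ escapes. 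I would structure the indistinguishability argument as a sequence of hybrids walking from the real execution to the fully simulated (ideal) one, phase by phase.

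First I would dispatch the initialization phase (Algorithm~\ref{alg:init}) and the oblivious-shuffling phase (Algorithm~\ref{alg:os}). In Algorithm~\ref{alg:init} every message a party receives is a one-time pad of a matrix under a fresh \textbf{PRF}-derived mask whose key the receiver does not possess; so the first hybrid replaces each such \textbf{PRF} output by a uniform string (indistinguishable by \textbf{PRF} security, appealing to \cite{detailGC}), after which all these messages are uniform and the simulator can sample them knowing only the matrix dimensions --- that is, knowing $\ell_2$ and $\ell_3$, together with $\ell_1$, which is exactly what the zero-padding/alignment step publicly reveals. For Algorithm~\ref{alg:os} I would invoke Theorem~\ref{pisecure} as a black box: it already guarantees neither party learns the other's permutation nor the plaintext $\mathbf{V}=\{\F^\A\ \F^\B\}$, so the next hybrid substitutes that sub-protocol by its promised simulator, and the shares $s_\A(\mathbf{V}),s_\B(\mathbf{V})$ --- and the analogous shares of $\pi(\pr)$ and $\pi(\as)$ --- entering phase~2 are, to each party individually, uniformly random.

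Then I would treat the GC-DPLL phase (Algorithm~\ref{alg:gc-dpll}). The central observation is that the only values ever disclosed are the ones marked \emph{release}: the index returned by \textbf{OblivULS} (line~26), the flags $b_c,b_s$ (lines~10--11), the branch index from \textbf{OblivBranch} (line~32), and the terminal bit; and the control flow --- which oblivious subroutine fires, how many loop iterations occur, the contents and push/pop history of the stack $T$ --- is a deterministic function of precisely this released sequence together with the (permuted) secret-shared and public auxiliary state $\textbf{u},\textbf{d},\textbf{c}$. This released sequence is, by definition, the ``searching pattern'' $\ell_4$, and because $\pi=\pi_\B\pi_\A$ is split across the two parties, every released index addresses the \emph{permuted} variable/clause set and hence conveys nothing about which literal of the original $\F^\A\land\F^\B$ is touched. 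Each arithmetic or boolean step inside \textbf{OblivRes}, \textbf{OblivCC}, \textbf{OblivULS}, \textbf{OblivBranch} is evaluated under the $[\cdot]$ convention, i.e. on XOR-shared wires whose opening is never requested, so each garbled-circuit call consumes freshly (re)masked shares and emits freshly masked shares; I would therefore replace the circuits one at a time, in execution order, by the garbled-circuit simulator of \cite{Garbledcircuits,yao1986generate,detailGC}, which needs only that call's released outputs --- all of which are entries of $\ell_4$ and so are available to $\mathrm{Sim}_P$. Sequential composition of all these hybrids, each neighbouring pair indistinguishable by \textbf{PRF} security, by Theorem~\ref{pisecure}, or by garbled-circuit security, produces a fully simulated transcript, which is exactly the ideal execution; this gives the claim.

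I expect the main obstacle to be the bookkeeping inside phase~2: (i) showing cleanly that the protocol's branching and stack behaviour is entirely pinned down by $\ell_4$ (plus the public bits), so a single simulator run reproduces the complete interaction; and (ii) justifying that chaining the many garbled circuits --- where the output shares of one (for instance the in-place rewrite of $[\pi(\F)]$ by \textbf{OblivRes}) are the input shares of the next (\textbf{OblivCC}) --- remains secure, i.e. appealing to a sequential-composition theorem for reactive secure computation and checking that every cross-circuit value is carried as an independent XOR share that no circuit ever opens. A secondary subtlety is verifying that the alignment step reveals exactly $\ell_1$ and $\ell_2$ and nothing more, and that the public evolution of $\textbf{d}$ (and, depending on how one reads the pseudocode, $\textbf{c}$) adds no structural information about $\F^\A,\F^\B$ beyond the permuted trace already accounted for in $\ell_4$.
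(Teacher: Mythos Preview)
Your proposal is correct and follows exactly the route the paper indicates: the paper states only that ``the formal proof of the theorem is based on simulation paradigm of cryptography~\cite{simulator} and omitted due to space limit,'' together with the observation that the released search pattern is over permuted literals and hence does not identify actual branching variables. Your plan instantiates precisely this simulation-based argument---hybrids replacing PRF masks by uniform strings, invoking Theorem~\ref{pisecure} for the shuffling phase, and sequentially substituting each garbled-circuit call by its simulator---so it is essentially the same approach, fleshed out in the detail the paper omits.
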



Essentially, we allow  $\A$ and $\B$ to learn searching pattern, \ie,  the
structure of the decision tree in DPLL, and the searching time.  Given a
formula, the backtrack search algorithm  organizes the search for a satisfying
assignment by maintaining a decision tree. 
By searching pattern, we mean the
structure of the traversal tree during the execution of DPLL. 
The searching pattern
essentially only shows when branching and backtracking occurs but does not
reveals what the branching literal is.  We stress that we can still show
security of the resulting system by formally proving the leakage is nothing more
than this. We accept this certain minimal amount of leakage is unavoidable for
the sake of efficiency. The formal proof of the theorem is based on simulation
paradigm of cryptography ~\cite{simulator} and omitted due to space limit.

We conclude our analysis with the following theorem on the efficiency of the
proposed solver.

\begin{thm}[\textbf{Efficiency}]\label{thm:efficiency}
Given a formula $f$, denote $T$ the number of steps in DPLL to determine if $f$ is satisfiable with the heuristics that can be expressed as two vectors $\as$ and $\pr$, about which we discuss in ~\ref{subsec: ppsat}, then the computational complexity as well as the communication complexity of GC-DPLL is $O(T\cdot mn)$.
\end{thm}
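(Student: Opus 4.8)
The plan is to establish the bound $O(T\cdot mn)$ by accounting separately for (a) the per-step cost of each invocation of the oblivious subroutines inside Algorithm~\ref{alg:gc-dpll}, and (b) the number of times the main loop of Algorithm~\ref{alg:gc-dpll} executes. For (b), I would appeal to the correctness argument (Theorem~\ref{correctness}): since Algorithm~\ref{alg:gc-dpll} faithfully implements DPLL under the heuristics encoded by $\as$ and $\pr$, the number of iterations of the outer \texttt{while} loop is exactly proportional to $T$, the number of DPLL steps. Each iteration does a constant number of calls to \textbf{OblivRes}, \textbf{OblivCC}, \textbf{OblivULS}, and \textbf{OblivBranch}, plus $O(1)$ stack pushes/pops; each stack frame has size $O(mn)$ (it stores a copy of $\pi(\F)$, the vectors $\textbf{c},\textbf{d},\pi(\as)$, and the index $i$), so the per-iteration bookkeeping is $O(mn)$. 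It remains to bound each oblivious subroutine by $O(mn)$ in both garbled-circuit size (computational complexity) and communication.

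The key step is a direct inspection of the four auxiliary algorithms. \textbf{OblivRes} (Algorithm~\ref{alg:or}) is a double loop over $j=1,\dots,m$ and $i=1,\dots,n$, with $O(1)$ gates of garbled-circuit work inside (a constant number of multiplications, comparisons, and additions on $O(1)$-bit values), hence $O(mn)$ gates and $O(mn)$ communicated labels. \textbf{OblivCC} (Algorithm~\ref{alg:oc}) has the same nested-loop structure with $O(1)$ work per cell, so $O(mn)$. \textbf{OblivULS} (Algorithm~\ref{alg:osul}) consists of a double loop over $j$ and $i$ followed by a single loop over $i$; the double loop dominates at $O(mn)$, and the word size for $\pi(\pr)_i$ is $O(\log n)$, which is absorbed into the constant (or contributes only a $\log$ factor that the theorem statement suppresses). \textbf{OblivBranch} (Algorithm~\ref{alg:ob}) is a single loop over $i=1,\dots,n$, costing $O(n)=O(mn)$. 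Summing, each main-loop iteration costs $O(mn)$ in circuit size and communication, and multiplying by the $O(T)$ iterations gives the claimed $O(T\cdot mn)$ for both complexities. I would also note that the one-time setup (Algorithm~\ref{alg:init}) and the oblivious shuffle (Algorithm~\ref{alg:os}) contribute only $O(mn)$ (up to the cost of $O(mn)$ \textbf{PRF} evaluations in a garbled circuit, treated as $O(1)$ gates each), which is dominated by the $T\ge 1$ iterations of the solver and so does not affect the asymptotic bound.

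The main obstacle I anticipate is making precise the claim that ``Algorithm~\ref{alg:gc-dpll} takes $O(T)$ iterations,'' since DPLL's step count $T$ must be defined consistently with what one iteration of the \texttt{while} loop accomplishes. Concretely, a single unit-propagation cascade could be charged either as one DPLL step or as many; I would fix the convention that $T$ counts each unit-literal assignment, each branching decision, and each backtrack as one step, and then observe that Algorithm~\ref{alg:gc-dpll} performs exactly one such primitive action per outer-loop iteration (one \textbf{OblivULS}/\textbf{OblivRes}/\textbf{OblivCC} round for a unit assignment, one \textbf{OblivBranch} for a decision, one stack \texttt{pop} loop for a backtrack). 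A secondary subtlety is that garbled-circuit cost is usually measured in gates while the theorem speaks of ``computational complexity''; I would state explicitly that we count the number of Boolean gates garbled and evaluated (equivalently, the running time up to the fixed per-gate cost of the cryptographic primitive), and that communication is the number of wire labels transmitted, which for a garbled circuit is linear in its gate count. With these conventions pinned down, the bound $O(T\cdot mn)$ follows by the loop-counting argument above.
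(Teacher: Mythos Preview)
Your proposal is correct and follows essentially the same approach as the paper: bound each oblivious subroutine by $O(mn)$ via inspection of its loop structure, note the one-time oblivious shuffle is $O(mn)$, and multiply by the $T$ iterations of the DPLL loop. Your write-up is considerably more detailed than the paper's proof sketch, and your explicit attention to the convention for counting $T$ and to the per-gate cost model fills in points the paper leaves implicit.
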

\vspace{-1em}
\begin{proof}
The proof can be directly derived from two facts (a) the overhead from garbled
	circuits is constant ~\cite{Garbledcircuits} so the computational complexity and
	communication complexity of each subroutine are both bounded by $O(mn)$.
	(b) the computation and communication for oblivious shuffling is
	$O(mn)$. The other parts of our protocol are exactly the same as DPLL,
	therefore  the total computation and communication will be bounded by
	$O(T\cdot mn)$. 
\end{proof}
\vspace{-1em}


\section{Evaluation}\label{sec:evaluation}

This section presents a prototype implementation of \system{}, and evaluate its performance via extensive experiments.

\subsection{Prototype Implementation}\label{sec:implementation}


To model the interdomain peering agreements verification problem, 
the \system{} prototype uses Batfish~\cite{batfish} and
Minesweeper~\cite{minesweeper} to parse vendor-specific BGP configurations to an
SMT formula, and then transforms the SMT formula to the configuration SAT
formula $\F^\B$ using bit blasting~\cite{malik2009boolean} and Tseytin
transformation~\cite{tseitin1983complexity}. The prototype also uses a similar
transformation procedure to transform the SMT formula of peering agreements to
the agreement SAT formula $\F^\A$.

The privacy-preserving SAT solver of this prototype is implemented using OpenSSL~\cite{viega2002network} and Obliv-C~\cite{zahur2015obliv}, a GCC wrapper to embed secure computation protocols inside regular C programs. The prototype uses AES-128 as the choice of PRF (pseudorandom function). One implementation optimization in the prototype is that, during the oblivious shuffling phase, instead of using component-wise operations on each element in the matrix, we divide elements in the same row of matrix into blocks of 128 bits, and use block-wise operations. Because each element in a SAT formula matrix has only 2 bits, this optimization improves the efficiency (in terms of computation, storage and communication) of oblivious shuffling by approximately 128/2=64 times. Overall, the \system{} prototype has $\sim$2700 lines of C and Obliv-C code.

\subsection{Experiment Methodology}
The goal is to examine the efficiency of \system{} on interdomain peering
agreement verification, and evaluate its overhead. To this end, two sets of
experiments are designed. The first set is the functionality check experiment.
Specifically, we setup the example abstract topology in
Figure~\ref{fig:motivation-example} as the evaluation scenario, where the two BGP
routers of the agreement provider $B$ are configured using the Cisco IOS
configuration language, and the agreement consumer specifies five representative
peering agreements~\cite{com-guide, vperf} (including the three introduced in
Section~\ref{sec:background}). 
The second set of experiment uses a standard SAT datasets (\ie, the SATLIB 
datasets~\cite{satlib}) to evaluate the overhead
of the privacy-preserving SAT solver more extensively.
In both sets of experiments, the agreement consumer and provider each runs its \system{} process on an Ubuntu server with Intel(R) Xeon(R) Platinum 8168 CPU and 376GB RAM.



\subsection{Results}
\para{Functionality check}: 
In the first experiment, five representative peering agreements 
are specified. For each agreement, we
create different interdomain agreement verification problems by using three
different sets of correct BGP configurations of $B$. As such, a total of 15
experiments are executed.  In all 15 experiments, \system{} successfully
determines that the problem of $\F^\A \wedge \F^\B$ is unsatisfiable, indicating
that the agreement-to-verify is implemented correctly at $B$. For each
agreement, we then deliberately change one correct BGP configuration to be
incorrect, and executes \system{} again. Results show that in all these 5
experiments, \system{} successfully finds a satisfiable solution to the problem
of $\F^\A \wedge \F^\B$, indicating that these agreements are not correctly
implemented. With this functionality check result, we conclude that \system{}
provides accurate results for interdomain peering agreement verification.



\para{Overhead study via SAT benchmark simulation}: 
In the second experiment, we select an open SAT dataset~\cite{satlib} whose input SAT
formulas have 50 variables and 100 clauses. In the simulation, we vary the
number of variables ($n$) to be from 5 to 50, with a step size of 5, and vary
the number of clauses ($m$) to be from 10 to 100, with a step size of 10. For
each combination, 30 SAT formulas are generated from the dataset. In each
experiment, each formula is approximately separated in half and held by $A$ and $B$, respectively. 

The first metric we study is \textit{oblivious shuffling delay}, 
\ie, the time consumption of the oblivious shuffling phase of the proposed
solver. Figure~\ref{fig:eval:time-precompute-m} shows that (1) the worst case of
oblivious shuffling delay is less than 10 seconds; and (2) the delay increases
sharply when $m$ reaches 64. This sharp increase is the result of the
implementation optimization, where the oblivious shuffling operations in the
same row are block-wise with 128/2=64 elements, not component-wise.
Figure~\ref{fig:eval:time-precompute-n} shows that the oblivious shuffling delay
increases linearly as the number of variables increases. The separation of two
lines in this figure is again the result of our block-wise shuffling operation.

Second, we study the \textit{verification delay}, which is the sum of the
oblivious shuffling delay and the delay of the GC-DPLL algorithm.
Figure~\ref{fig:eval:time-obliv-total} shows that the oblivious shuffling delay
takes about 50\% of the verification delay, and that the impact of $m$ on the
verification delay increases as $n$ increases.

Third, we study the size of transmitted data during the verification. 
Figure~\ref{fig:eval:trans-obliv-total} shows that the size of total transmitted data is
up to 3 GB. Although this seems like large, in modern networks where the interdomain bandwidth is
often of hundreds of GBs, this data transmission overhead is negligible.

Fourth, we compare the performance of the GC-DPLL algorithm with the original
DPLL algorithm. Figure~\ref{fig:eval:cmp-log-time} compares their time
consumption. We observe that the GC-DPLL algorithm increases the verification
delay of the DPLL algorithm by three orders of magnitude. Considering that the
absolute verification delay of GC-DPLL is of tens of seconds, this increase is acceptable.
With these overhead study, we conclude that \system{} is an efficient solution for
privacy-preserving interdomain agreement verification.


\begin{figure}[t]
    \centering
    \begin{subfigure}[t]{.43\linewidth}
        \centering\includegraphics[width=\linewidth]{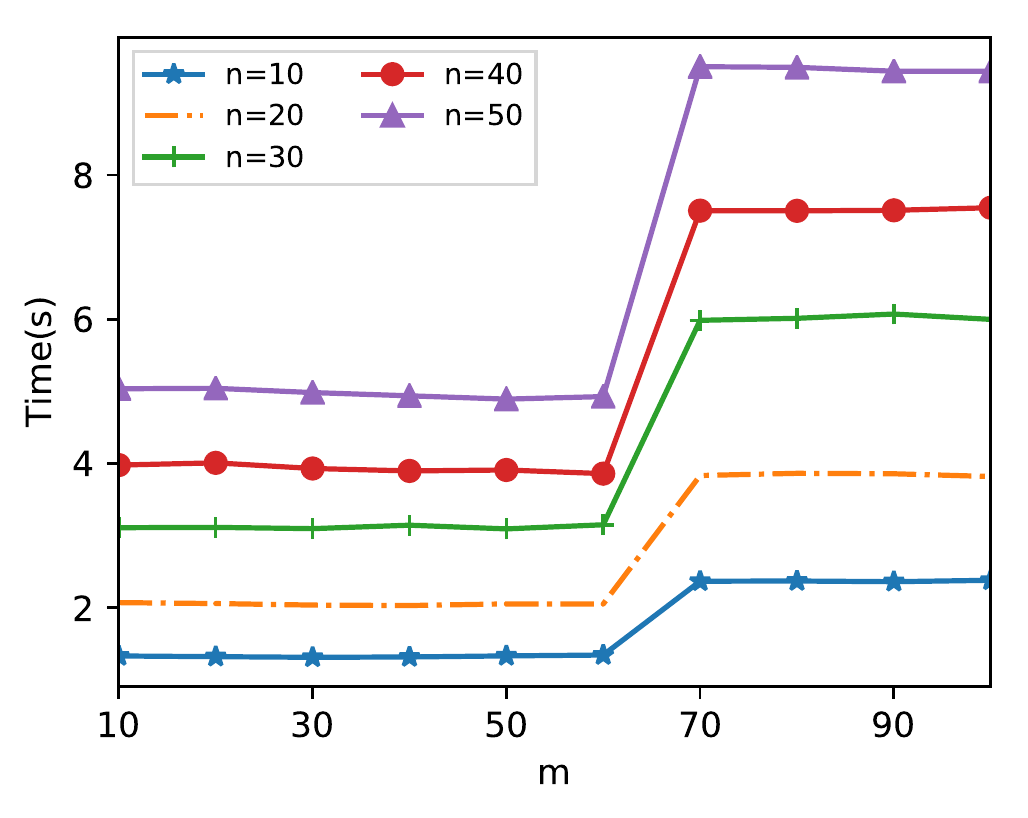}
    \caption{\label{fig:eval:time-precompute-m} \scriptsize Varying number of
	    clauses $m$.}
    \end{subfigure}
    \begin{subfigure}[t]{.43\linewidth}
        \centering\includegraphics[width=\linewidth]{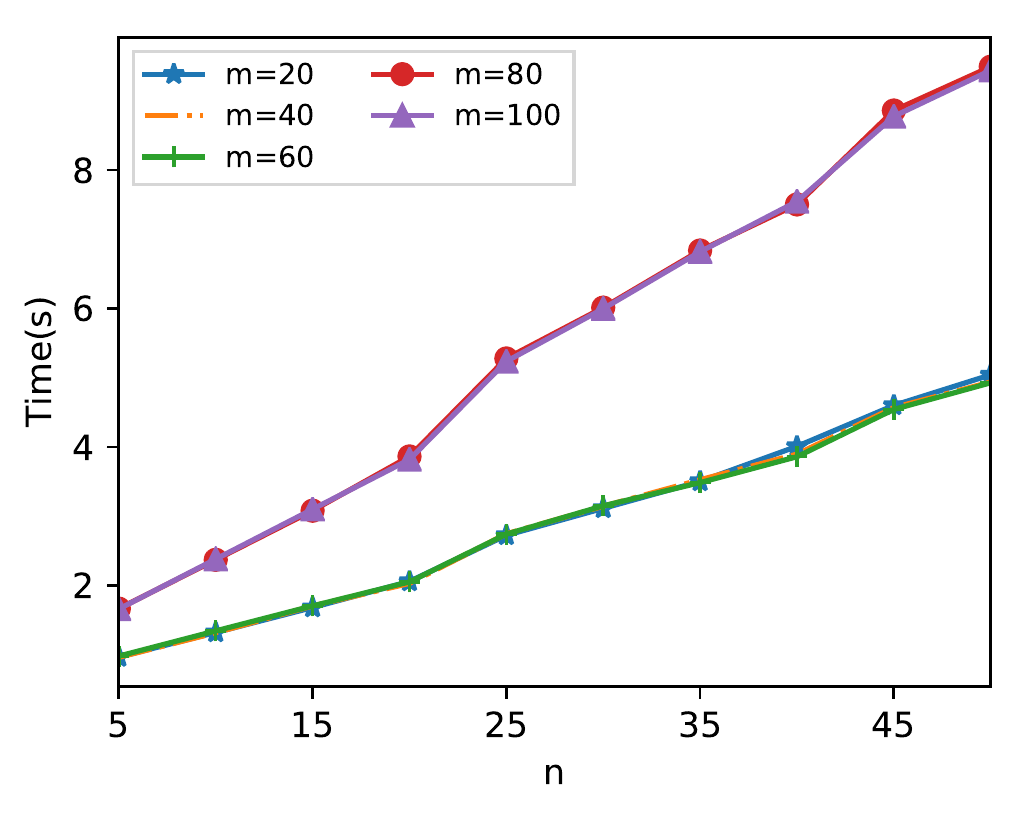}
    \caption{\label{fig:eval:time-precompute-n} \scriptsize Varying number of
	    variables $n$.}
    \end{subfigure}
	\caption{Oblivious shuffling delay of \system{}.}
    \label{fig:eval:time-precompute}
\end{figure}


\begin{figure}[t]
    \centering
    \begin{subfigure}[t]{.43\linewidth}
        \centering\includegraphics[width=\linewidth]{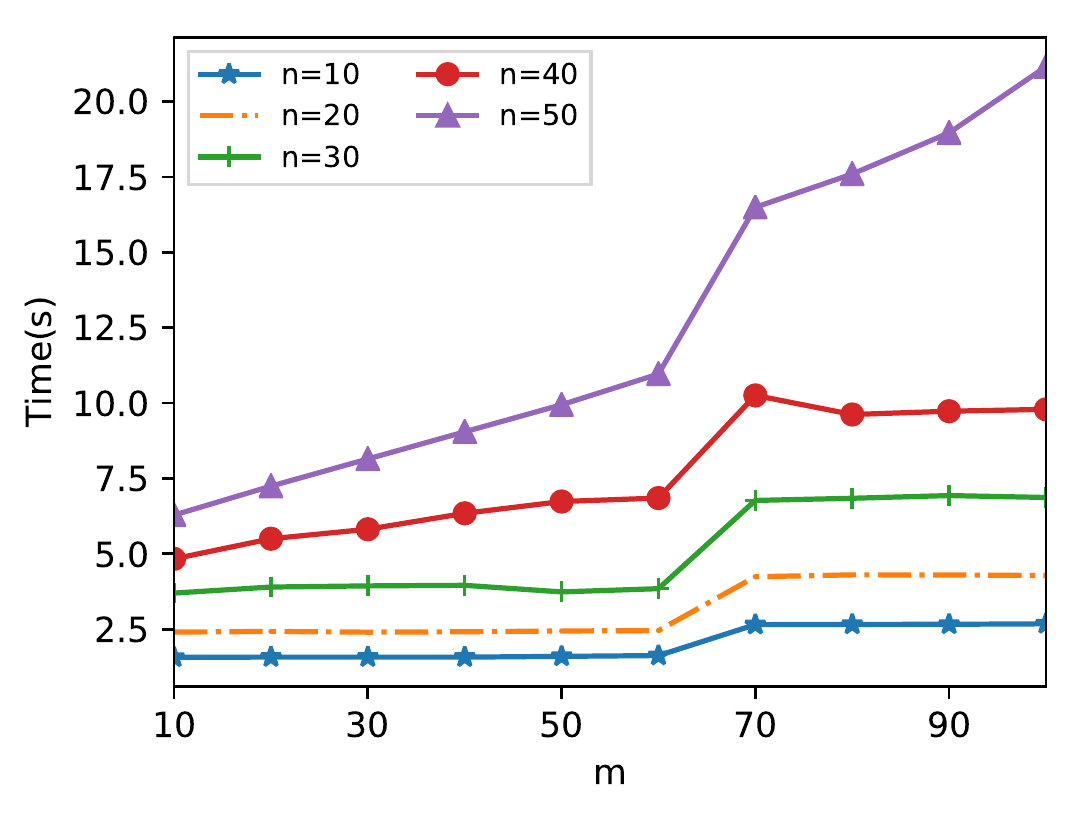}
    \caption{\label{fig:eval:time-obliv-total-m} \scriptsize Varying number of
	    clauses $m$.}
    \end{subfigure}
    \begin{subfigure}[t]{.43\linewidth}
        \centering\includegraphics[width=\linewidth]{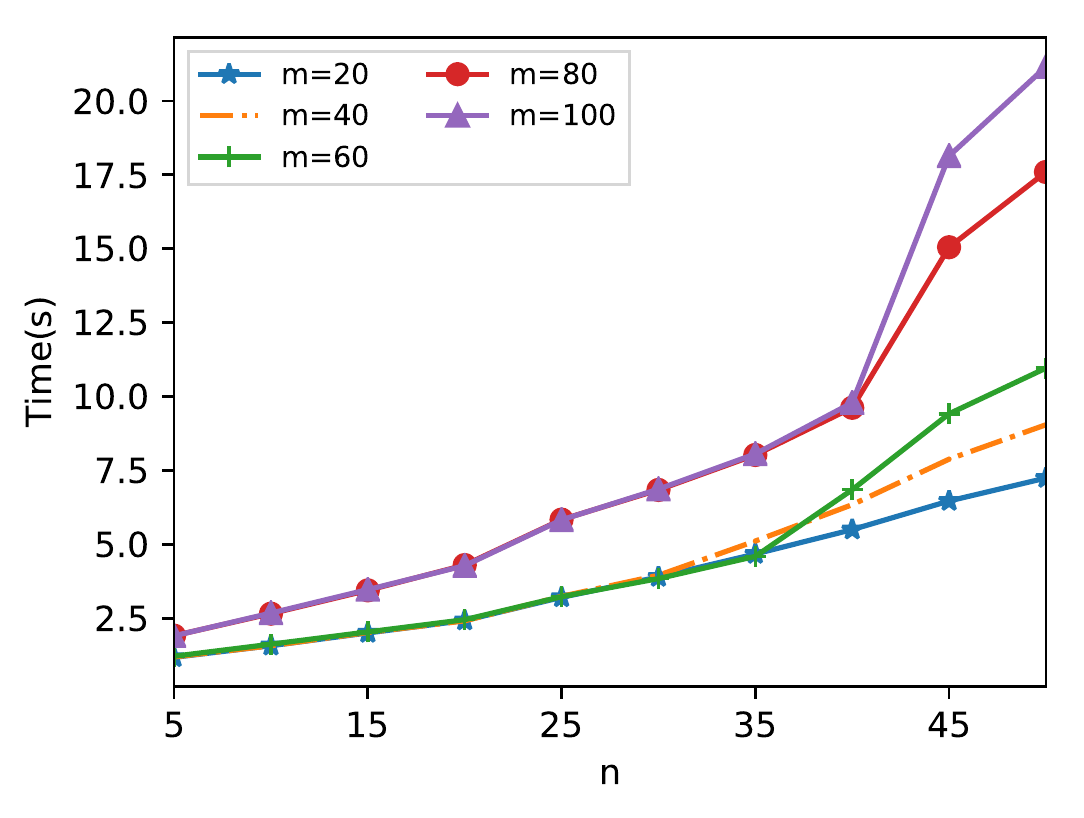}
    \caption{\label{fig:eval:time-obliv-total-n} \scriptsize Varying number of
	    variables $n$.}
    \end{subfigure}
	\caption{Verification delay of \system{}.}
    \label{fig:eval:time-obliv-total}
\end{figure}


\begin{figure}[t]
    \centering
    \begin{subfigure}[t]{.43\linewidth}
        \centering\includegraphics[width=\linewidth]{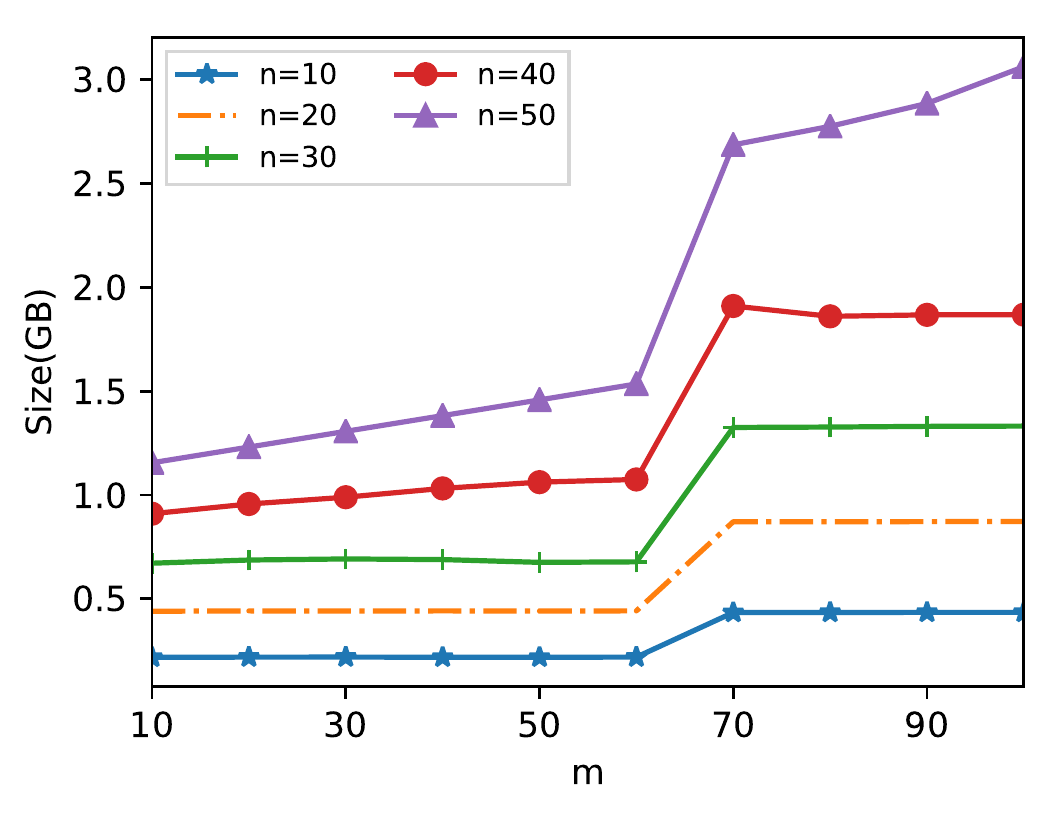}
    \caption{\label{fig:eval:trans-obliv-total-m} \scriptsize Varying number of
	    clauses $m$.}
    \end{subfigure}
    \begin{subfigure}[t]{.43\linewidth}
        \centering\includegraphics[width=\linewidth]{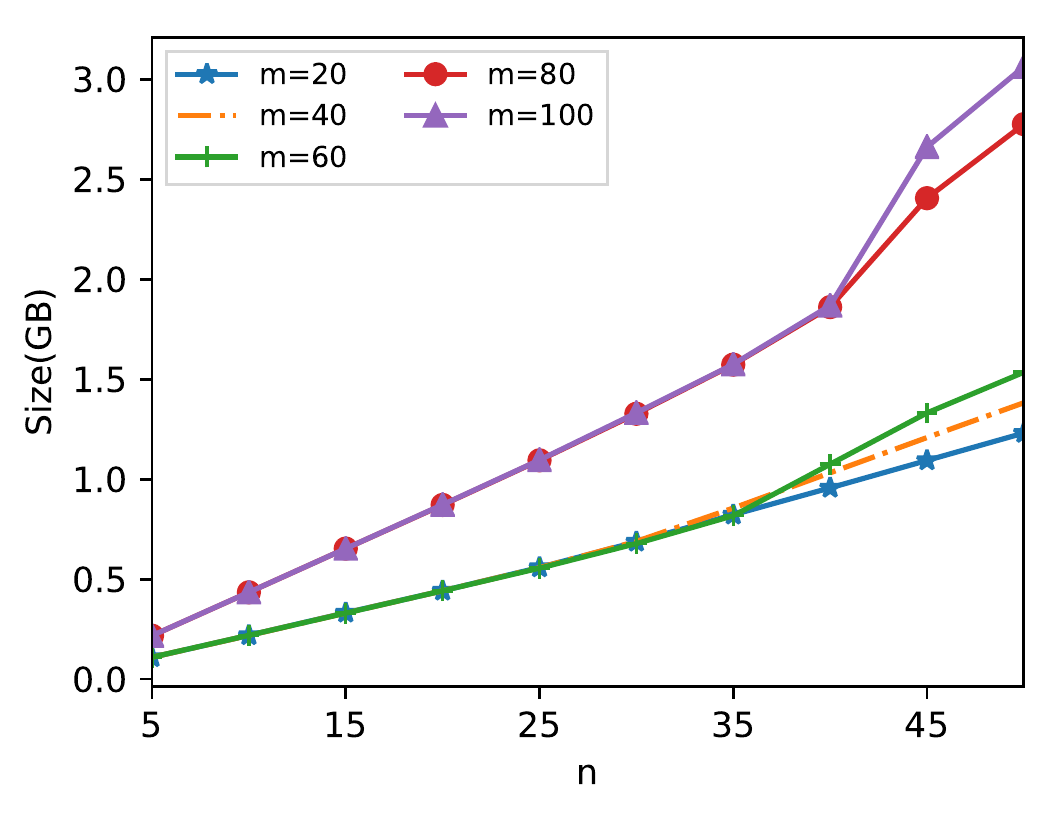}
    \caption{\label{fig:eval:trans-obliv-total-n} \scriptsize Varying number of
	    variables $n$.}
    \end{subfigure}
	\caption{Data transmission overhead of \system{}: total message size.}
    \label{fig:eval:trans-obliv-total}
\end{figure}

\begin{figure}[t]
    \centering
    \begin{subfigure}[t]{.43\linewidth}
        \centering\includegraphics[width=\linewidth]{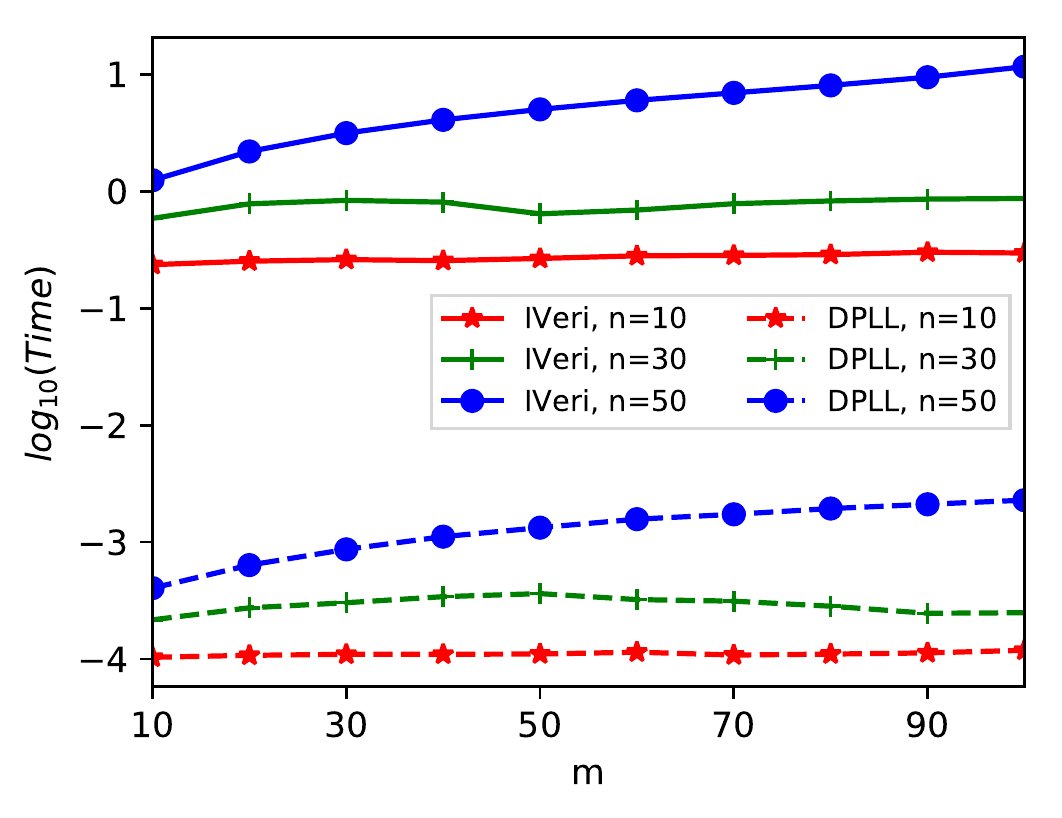}
    \caption{\label{fig:eval:cmp-log-time-m} \scriptsize Varying number of
	   clauses $m$}
    \end{subfigure}
    \begin{subfigure}[t]{.43\linewidth}
        \centering\includegraphics[width=\linewidth]{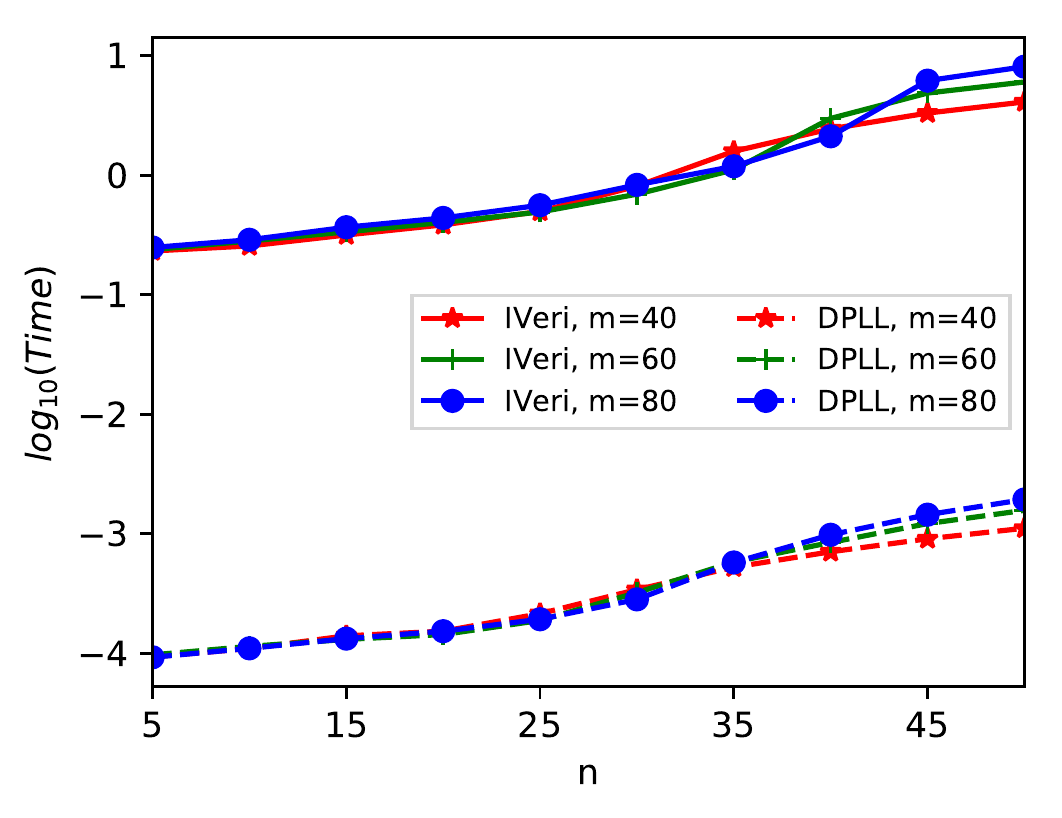}
    \caption{\label{fig:eval:cmp-log-time-n} \scriptsize Varying number of
	    variables $n$.}
    \end{subfigure}
    \caption{Computation delay: GC-DPLL vs.  DPLL.}
    \label{fig:eval:cmp-log-time}
\end{figure}


\vspace{-0.5em}
\section{Related Work}\label{sec:related}

\para{Network verification}.  
Network verification tools are roughly categorized in two classes. First,
network configuration verification tools (\eg, \cite{minesweeper, compression, era}) focus on finding errors in the configurations of
routing protocols (\ie, the control plane of network) within the same network by
building and analyzing a symbolic network model (\eg, an SMT model). \system{}
heavily leverages this approach to model BGP configurations, 
but goes beyond to build new models for
interdomain peering agreements.
Second, data plane verification tools (\eg, \cite{hsa, 
atomic-predicate, veriflow}) focus
on verifying that the data plane of a network (\eg, the
forwarding tables and access control lists) satisfies certain properties (\eg,
loop-freeness).  Although tools such as Looking
Glass~\cite{khan2013level} and RouteView~\cite{routeviewsuniversity} expose certain BGP data plane
configurations (\eg, selected routes) of participating ASes, such incomplete
information makes it non-trivial to extend data plane verification
to interdomain networks. 

\para{Verification in interdomain networks}. 
There are some efforts in interdomain network verification~\cite{vperf, fsr,
bagpipe, sidr, netreview, sam-icnp-collaborative, netquery}. FSR~\cite{fsr},
Bagpipe~\cite{bagpipe} and biNode~\cite{9155235}
focus on verifying the basic property (\ie, stability and reachability)
of BGP in interdomain networks and require the complete exposure of the BGP
configurations of all ASes. 
SIDR~\cite{sidr} focuses on
verifying the safety property of data plane configurations of multiple
SDXes, and sacrifices accuracy for privacy and safety.
NetReview~\cite{netreview} and VPerf~\cite{vperf} focus on the errors in BGP
configurations in BGP peers and are work most related to \system{}. However,
they operate on a per route announcement basis, \ie, it can
only find errors after a route announcement is sent. To the best of our
knowledge, \system{} is
the first to verify the agreement implementation in BGP configurations
proactively while preserving the privacy of ASes.


\para{SAT solver}. 
SAT solver is an active research area with a rich
literature~\cite{malik2009boolean, gomes2008satisfiability, dpll,
sorensson2005minisat, qin2014structure}. 
\cite{qin2014structure} studies how
to disguise a SAT formula held by one party from the other party who holds an
SAT solver
However, to the best of our knowledge, we are the first SAT solver 
to decide the satisfiability 
of the conjunction of two private SAT formulas held by two parties.

\para{SMPC in networks}.
SMPC-based routing systems have been proposed 
to let ASes collectively compute interdomain routes for better
network performance
~\cite{chen2018sdn,
asharov2017privacy}.  They are orthogonal to this paper, as \system{} focuses on
verifying interdomain peering agreements instead of designing new routing
systems. 


\vspace{-2mm}
\section{Conclusion}\label{sec:conclusion}
We investigate the important problem of interdomain
peering agreement verification. We identify privacy as the fundamental
challenge, and design \system{}, the first privacy-preserving interdomain
agreement verification system, whose core is 
a novel privacy-serving SAT solver.
Extensive evaluation on a \system{} prototype demonstrates its efficiency and efficacy.


\bibliographystyle{IEEEtran}
\bibliography{all}
\end{document}